\documentclass[11pt]{article}
\usepackage{graphicx} 
\usepackage{amsmath,amssymb,enumerate,latexsym}
\usepackage{hyperref}
\usepackage{thm-restate}
\usepackage{cleveref}
\usepackage{authblk}
\usepackage[letterpaper, total={6.5in, 9in}]{geometry}
\usepackage{braket}

\newcommand{\assign}{:=}
\newcommand{\nin}{\not\in}
\newcommand{\op}[1]{#1}
\newcommand{\tmop}[1]{\ensuremath{\operatorname{#1}}}

\newenvironment{enumeratenumeric}{\begin{enumerate}[1.] }{\end{enumerate}}
\newenvironment{enumerateroman}{\begin{enumerate}[i.] }{\end{enumerate}}
\newenvironment{itemizedot}{\begin{itemize} }{\end{itemize}}
\newenvironment{proof}{\noindent\textbf{Proof\ }}{\hspace*{\fill}$\Box$\medskip}

\newtheorem{theorem}{Theorem}[section]
\newtheorem{lemma}[theorem]{Lemma}
\newtheorem{definition}[theorem]{Definition}

\newtheorem{corollary}[theorem]{Corollary}
\newtheorem{conjecture}[theorem]{Conjecture}

\title{Quantum Fine-Grained Lower Bounds for SetDisjointness via Sub-Linear Reductions from 3SUM}
\author{Jeremy Ahrens Huang}
\author{Young Kun Ko}
\author{Chunhao Wang}
\affil[]{Department of Computer Science and Engineering, Pennsylvania State University}
\affil[]{Email: \{jeremyah,ykko,cwang\}@psu.edu}
\date{September 2026}

\begin{document}

\maketitle

\begin{abstract}
    In classical fine-grained complexity, the 3SUM Conjecture is used to prove a variety of conditional lower bounds on data structure and graph problems via an initial reduction to the SetDisjointness problem. However, there is an $\tilde{O}(n)$-time quantum algorithm for 3SUM and a direct application of Grover's algorithm to SetDisjointness queries beats the state-of-the-art classical conditional bound by Kopelowitz, Pettie, and Porat (SODA 2016); this shows that these classical bounds do not apply in the quantum setting. Thus establishing analogous conditional lower bounds in the quantum setting requires applying the quantum 3SUM Conjecture to a \emph{quantum} fine-grained reduction from 3SUM to SetDisjointness.

    We give the first sub-linear time quantum reductions from 3SUM to online SetDisjointness. Via our reduction, the quantum 3SUM conjecture implies a $p + 2q \geqslant 1$ tradeoff bound for quantum SetDisjointness algorithms with $O(N^p)$ preprocessing time and $O(N^q)$ query time. We also give an analogous reduction from 3XOR. These results are derived from a general framework for fine-grained reductions to SetDisjointness which applies to any Abelian 3-Orthogonal Array (3OA) problem with suitable almost-linear hash functions.
\end{abstract}

\section{Introduction}

In classical fine-grained complexity, conditional lower bounds for many data structure and graph problems are given via reduction from the 3SUM problem through the SetDisjointness problem using the following state-of-the-art bound by \cite{KPP16}: if 3SUM requires $\Omega(n^{2-o(1)})$ classical time then for any $\gamma \in (0,1)$ answering $\tilde{O}(n^{1+\gamma})$ SetDisjointness queries between sets of size $O(n^{1-\gamma})$ must take $\Omega(n^{2-o(1)})$ classical time. Unfortunately, both sides of this bound are broken in the quantum setting: 3SUM is known to be in $\tilde{O}(n)$ quantum time, and even a naive application of Grover search to answer each SetDisjointness query in $O(n^{(1-\gamma)/2})$ quantum time solves these SetDisjointness instances in sub-quadratic time. Since 3SUM is in $\tilde{O}(n)$ quantum time, analogous conditional data structure and graph lower bounds for quantum fine-grained complexity would require a sub-linear-time reduction from 3SUM to SetDisjointness. This is a tall order for existing reduction strategies which rely on steps, such as reading the responses to all $\tilde{O}(n^{1+\gamma})$ queries, that are inherently super linear.

In this paper we show the first quantum fine-grained reduction from 3SUM to online SetDisjointness that has sub-linear running time via a general framework for sub-linear time reductions to SetDisjointness from all Abelian 3-Orthogonal Array problems (in both quantum and classical settings). We also apply our framework to obtained a sub-linear time reduction from 3XOR to online SetDisjointness, and we think it will be helpful for proving reductions from other problems as well. In order to create our framework we also made two novel technical contributions which we think will be useful for many other quantum algorithms and reductions. First, we completed a general time complexity analysis of the $L$-subset finding quantum walk \cite{CE05, Amb04}; previous works were focused only on its query complexity. Second, we give a tighter error analysis of \cite{Amb04, BLPS22}’s skip-list data structure and give general lemmas for its properties; previously it was only analyzed as a part of the $L$-subset finding quantum walk. 

\subsection{Previous work}
The classical 3SUM conjecture is one of the three most popular conjectures in fine-grained complexity \cite{Vas15}, and has withstood over thirty years of study since it was first proposed as an underlying barrier for geometric problems \cite{GO95}. The conjecture is that the 3SUM problem requires $\Omega(n^{2-o(1)})$ classical time, or equivalently, that there does not exist a constant $\delta>0$ such that 3SUM is in $O(n^{2-\delta})$ classical time; the quantum 3SUM conjecture \cite{AL20} similarly states that the 3SUM problem requires $\Omega(n^{1-o(1)})$ quantum time. The 3SUM problem is the problem of deciding if, out of a list $S$ of $n$ integers from $[-n^3,\ldots,n^3]$, there are three $(a,b,c)$ such that $a+b=c$. 3SUM can be solved by a naive search over all unique unordered pairs $(i,j) \in [n]$ for one such that $S_i+S_j \in S$, which can be done in $\tilde{O}(n)$ quantum time and $\tilde{O}(n^2)$ classical time, so the 3SUM conjectures implicitly ask whether it is possible to do meaningfully better than naive search.

\cite{Pat10} showed that disproving the classical 3SUM conjecture is also a barrier for graph and data structure problems by reducing the 3SUM problem to SetIntersection (a variant of SetDisjointness) in sub-quadratic time via a 3SUM variant called Conv3SUM. \cite{KPP16} subsequently proved tighter conditional lower bounds (CLBs) on SetDisjointness and SetIntersection by giving a direct reduction from 3SUM (skipping Conv3SUM). SetDisjointness is the problem of deciding if the intersection of two sets is empty for each of $q$ queries between sets drawn from a given family of sets $\mathcal{F}$ with elements drawn from a universe $U$; the input size $N$ is given by the total of the number of elements in each set $\sum_i |\mathcal{F}_i|$. Note that this problem cannot be a regular decision or search problem because the output is potentially larger than the input since $q$ could be as large as $|\mathcal{F}|^2$. SetIntersection is a variant of SetDisjointness where the output for each query is the full contents of the intersection between sets.

Since 3SUM can be solved in linear quantum time, analogous reductions in the quantum setting must be \emph{sub-linear} time. There are two main impediments to sub-linear versions of these reductions: first, almost all of them require placing all $n$ integers of $S$ into some kind of data structure, second, the reductions like those of \cite{Pat10} and \cite{KPP16} require their destination problems to produce \emph{super-linear} outputs. \cite{BLPS22} tackled the first impediment by using the quantum walk and data structure of \cite{Amb04} to implement the data structures needed by \cite{GO95} in sub-linear. They were also able to quantize the reduction from 3SUM to Conv3SUM by modifying the data structure to support index operations. This work is the first to address the second impediment. 

While the time complexity of 3SUM is an ongoing area of study, the quantum query complexity of 3SUM is very well understood. The general $L$-subset finding quantum walk algorithm of \cite{CE05, Amb04} (the same one used by \cite{BLPS22}) finds any $L$ items satisfying any predicate in a set of $n$ items in $O(n^\frac{L}{L+1})$ queries and works as a kind of self reduction to the $L$-subset finding problem with a data structure on a sub-linear size subset of the original problem. Thus it can solve to 3SUM in $O(n^{3/4})$ queries. Surprisingly, \cite{BS13, Spa13a} found that this general upper bound is tight for 3SUM; they find that the only property of 3SUM that matters for its query complexity is that for every pair of numbers there is exactly one number such that the three numbers together make a valid triple. They generalize this as a $\Theta(n^\frac{k}{k+1})$ query bound for all ``full-strength'' $k$-orthogonal array problems (the class of problems where all $(k-1)$ tuples of elements have exactly one additional element that makes them valid). 

\subsection{Results}

We start with our main result: a preprocessing and query time tradeoff for online SetDisjointness 
conditioned on the hardness of 3SUM or 3XOR. 

\begin{restatable}[]{theorem}{restateonlinetheorem}
  Unless the quantum 3SUM conjecture is false and
  there is an $O (n^{1 - \delta})$-time quantum algorithm for 3XOR with $\delta > 0$,
  for any constants $p \in \left[ \frac{1}{2}, 1 \right)$ and $q \in \left[
  0, \frac{1}{4} \right]$ such that there is a quantum algorithm for 
  Online SetDisjointness with $O (N^p)$
  preprocessing time and $O (N^q)$ time per query which correctly
  answers queries with bounded error the following inequality must hold
  \[p + 2 q \geqslant 1.\] 
  
\end{restatable}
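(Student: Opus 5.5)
We argue by contraposition. Suppose some quantum algorithm $\mathcal{A}$ for Online SetDisjointness has $O(N^p)$ preprocessing time and $O(N^q)$ query time, with $p+2q<1$. From it we build an $O(n^{1-\delta})$-time quantum algorithm for 3SUM, and by the same framework one for 3XOR. This follows the structure the introduction describes.

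First we hash. Using an almost-linear hash $h:\mathbb{Z}\to[m]$ (for 3XOR, a linear hash over $\mathbb{F}_2$), the input $S$ is split into buckets $B_1,\dots,B_m$ of expected size $n/m$. Any witness triple $a+b=c$ then satisfies $h(c)\in h(a)+h(b)+\{0,\pm 1\}$. For each bucket $i$ and each shift $z$ we define a set $F_{i,z}$ over a universe of fingerprints, in the style of \cite{KPP16}, so that a triple exists iff some query pair $(F_{i,z},F_{j,z'})$ intersects for a pair of buckets compatible with the hash. The key point is not to materialize all of $S$. Instead we run the $L$-subset finding quantum walk \cite{CE05,Amb04}, with the skip-list data structure of \cite{Amb04,BLPS22}, on a random subset $R\subseteq S$ of size $r$. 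The SetDisjointness family $\mathcal{F}$ is built only from $R$, so $N=\tilde O(r)$. We invoke the preprocessing of $\mathcal{A}$ on $\mathcal{F}$, and each walk check then asks whether $R$ contains a witness using Grover search over the $\tilde O(m)$ relevant bucket pairs. Each Grover step is one call to $\mathcal{A}$ costing $O(r^q)$, and bounded error is amplified by $O(\log n)$ repetitions.

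Next we account for costs. Setup and data-structure updates are handled by our general time analysis of the walk. The setup cost is $\tilde O(r+r^p)$, where the preprocessing term must be incurred once at setup or maintained under updates. The update cost is polylogarithmic via the skip list. The checking cost is $\tilde O(\sqrt{m}\,r^q)$. With walk parameters $\epsilon\approx (r/n)^{2}$ for triples (3-subset finding) and $\delta\approx 1/r$, the total time is
\[ \tilde O\!\left(S+\frac{1}{\sqrt{\epsilon}}\Bigl(\frac{1}{\sqrt{\delta}}U+C\Bigr)\right). \]
The main obstacle is that $\mathcal{A}$ is a static data structure: preprocessing must not be redone on every walk update. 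We handle this by making each bucket's set a union of few small sets, so that an update touches only $O(1)$ sets. A single element change is then absorbed either by rebuilding that small set's portion, at cost $O((r/m)^p)$ amortized over the skip list, or by treating the changed element separately with direct Grover checks. Balancing then gives total time $\tilde O\big(r^p + \tfrac{n}{r}(\sqrt{r}+\sqrt{m}\,r^q)\big)$.

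Finally we optimize. We choose $m$ so that buckets have constant expected size, and take $r=n^{\alpha}$. The terms become $n^{\alpha p}$ and $n^{1-\alpha/2}$, plus a checking term that behaves like $n^{1-\alpha(1-2q)/2}$ up to the hashing factor. Since $p<1$ and $p+2q<1$, we can pick $\alpha$ slightly larger than $1$ in the normalized sense, namely choosing $r$ relative to the instance so that $\alpha p<1$. This makes every exponent strictly less than $1$, contradicting the quantum 3SUM conjecture. We expect the most delicate step to be bounding the hashing false positives and the error accumulation of $\mathcal{A}$ across $\tilde O(\sqrt{\cdot})$ Grover calls. There we would rely on the tightened skip-list error lemmas and on the almost-linearity property of the hash, as in the general Abelian 3OA framework. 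The 3XOR case is identical with the exact linear hash, which removes the $\pm1$ shifts.
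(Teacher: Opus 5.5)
You have the right ingredients (hash buckets, a KPP16-style family, the $3$-subset walk with the skip list), but the cost accounting has a genuine gap, and it sits exactly where the hypothesis $p+2q<1$ should be doing the work. Put $r$ elements into $m$ buckets of size $r/m$. To keep the per-query collision probability bounded, the fingerprint universe must have size at least about $(r/m)^2$. After the P{\u a}tra\c{s}cu split you therefore get about $r$ sets of size $r/m$, i.e.\ $N=\tilde O(r^2/m)$. Moreover, a witness $(a,b,a+b)$ is caught only by the query indexed by $(h(a),b)$, so there are about $mr$ candidate queries, not $\tilde O(m)$ ``relevant bucket pairs''. Searching them costs about $\sqrt{mr}\,N^q$. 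In your regime of constant-size buckets ($m\approx r$, essentially the only regime where $N=\tilde O(r)$), a single check therefore costs $\Omega(r)$ and the walk gains nothing.

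A symptom of the miscount is that your final exponents $\alpha p$, $1-\alpha/2$, $1-\alpha(1-2q)/2$ are all below $1$ whenever $p<1$ and $q<1/2$. The hypothesis $p+2q<1$ is never used, so you would be proving a strictly stronger tradeoff than the stated one. The correct balance takes $m=r^{\gamma}$ and $N=\tilde O(r^{2-\gamma})$, giving inner cost $r^{p(2-\gamma)}+r^{(1+\gamma)/2+q(2-\gamma)}$ (plus $r^{(1+\beta)/2}$ for heavy buckets). The choice $\gamma=\frac{4(p-q)-1}{2(p-q)+1}$ equalizes the two main terms at exponent $\frac{3p}{2(p-q)+1}$, which is below $1$ exactly when $p+2q<1$. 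You also need $\tilde O(1)$ independent compression hashes and an in-bucket verification of each surviving $(i,b)$, because with one hash a constant fraction of queries report spurious intersections. The paper uses variable-time amplitude amplification so that this verification does not multiply the query cost.

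Second, your dynamization step fails, and it is also unnecessary. $\mathcal{A}$ is a black-box static algorithm, so several things go wrong:
\begin{itemize}
\item its preprocessing cannot be partially rebuilt for one small set;
\item amortized bounds are meaningless for the update unitary of a quantum walk, which must be a worst-case-bounded map $|D[V]\rangle\mapsto|D[V']\rangle$ between history-independent states;
\item any list of ``changed'' elements handled by separate Grover checks makes the stored state depend on the walk's history rather than on $V$ alone, which destroys the interference the walk relies on.
\end{itemize}
The paper avoids the issue entirely. The walk maintains only the skip-list hash-bucket structure, which has polylogarithmic updates. The whole inner reduction is performed inside the checking unitary: building oracles for $\mathcal{F}$, running $\mathcal{A}$'s preprocessing, and searching over pairs $(i,b)$. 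That unitary is applied only $O((n/r)^{3/2})$ times. Relatedly, for $3$-subset finding the marked fraction is about $(r/n)^3$, not $(r/n)^2$, and $r=n^{\alpha}$ with $\alpha>1$ is impossible. With these fixes, \Cref{lem:time-of-johnson-walk} turns an $r^{1-\delta}$-time check into an $\tilde O(n^{\max(3/4,\,3/(3+2\delta))})$-time algorithm. The statement then follows by plugging in the KPP16 family for 3SUM and the JV16 linear family for 3XOR.
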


We restate \cite{KPP16}'s classical tradeoff bound below for comparison: 
\begin{theorem}[\cite{KPP16} Corollary 1.8 rephrased]
    Assume the classical 3SUM conjecture.
    Fix constants $p \in [1,2)$ and $q \in [0, 1/2]$ and suppose there is 
    an algorithm for SetDisjointness with $O (N^p)$ classical
    preprocessing time and $O (N^q)$ classical time per query. Then 
    \[p+2q\geqslant 2.\] 
\end{theorem}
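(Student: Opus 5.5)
This is the classical result of \cite{KPP16}. I would not re-derive their reduction. The plan is to take their reduction from 3SUM to SetDisjointness as a black box and do only the exponent bookkeeping needed to turn it into the stated tradeoff. The argument is by contradiction. Suppose a classical SetDisjointness algorithm has $O(N^p)$ preprocessing and $O(N^q)$ query time with $p+2q<2$. I will choose the reduction's free parameter $\gamma\in(0,1)$ so that simulating the reduction with this algorithm solves 3SUM in $O(n^{2-\delta})$ time for some $\delta>0$, contradicting the classical 3SUM conjecture.

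\textbf{Step 1 (the reduction).} I recall the shape of the \cite{KPP16} reduction, which this plan takes as given. Hash the $n$ integers into $R=n^{\gamma}$ buckets with an almost-linear hash $h$, so that $a+b=c$ implies $h(a)+h(b)\in h(c)+\{\text{$O(1)$ offsets}\}$. Discard elements lying in overloaded buckets, and handle them separately by brute force in $\tilde O(n^{2-\gamma})$ expected time. Form sets indexed by buckets and shifts, each of size $O(n^{1-\gamma})$. Then every candidate triple is detected by one of $\tilde O(n^{1+\gamma})$ SetDisjointness queries, each with total input size $N$ polynomial in $n$. A positive answer is verified by scanning the two sets, with an amortized cost per query of $\tilde O(n^{1-\gamma})$ arising from false positives. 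From their paper I would extract the precise value of $N$ as a function of $n$ and $\gamma$, say $N=n^{e(\gamma)}$.

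\textbf{Step 2 (running time).} The simulated 3SUM algorithm costs
\[\tilde O\bigl(n^{e(\gamma)p}+n^{1+\gamma+e(\gamma)q}+n^{2-\gamma}\bigr).\]
The last term is always subquadratic for $\gamma>0$. It remains to show that under $p+2q<2$ some admissible $\gamma$ makes the first two exponents strictly less than $2$.

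\textbf{Step 3 (exponent algebra), the main obstacle.} For the two exponents to cross exactly on the line $p+2q=2$, the instance size must scale so that preprocessing $N^p$ and the query total $n^{1+\gamma}N^q$ trade off with slope $2$. This is what forces the specific set sizes and set counts in \cite{KPP16}, and it is the step I expect to be delicate. I would first try to check that $e(\gamma)$ is linear in $\gamma$, and that at $p=1$ (respectively $q=1/2$) the critical $\gamma$ tends to the endpoint of the allowed range. Granting that, I would choose $\gamma$ just above the point where $e(\gamma)p=2$ and verify that $p+2q<2$ then gives $1+\gamma+e(\gamma)q<2$ with a margin $\delta>0$ depending only on $p$ and $q$.

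Finally, I would handle the boundary constraints $p\in[1,2)$ and $q\in[0,1/2]$ separately, to make sure the chosen $\gamma$ lies in $(0,1)$. Outside this box the trivial algorithms already match the bound, so nothing needs to be proved there.
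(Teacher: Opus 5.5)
Your strategy (cite the \cite{KPP16} reduction and do the exponent bookkeeping) is the right one. It is all the paper relies on: the statement is quoted from \cite{KPP16} for comparison and is not proved in the paper, in the same spirit as the paper's quantum analogue in \Cref{thm:clb-framework-online}.

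The gap is that you never do the bookkeeping. You leave $N=n^{e(\gamma)}$ unknown and call the resulting algebra the main obstacle. Pin it down as follows. The \cite{KPP16} instance has $\tilde\Theta(n)$ sets: $n^{\gamma}$ buckets, times $\Theta(n^{1-\gamma})$ shifts from the P{\u a}tra\c{s}cu split of a compressed universe of size $\Theta(n^{2-2\gamma})$, times two sides and $O(\log n)$ independent compression hashes. Each set has size $O(n^{1-\gamma})$, so $e(\gamma)=2-\gamma$ up to logarithmic factors; the paper lists exactly these parameters in its techniques overview.

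Step 3 is then two lines. You need $(2-\gamma)p<2$ and $1+\gamma+(2-\gamma)q<2$, i.e.\ $2-\frac{2}{p}<\gamma<\frac{1-2q}{1-q}$. Clearing denominators shows this interval is nonempty iff $p+2q<2$. Its endpoints lie in $[0,1]$ because $p\ge 1$ and $q\ge 0$, so any $\gamma$ strictly inside is admissible. Concretely, $\gamma=\frac{2(p-q)-1}{(p-q)+1}$ equalizes both exponents at $\frac{3p}{p-q+1}$, giving margin $\delta=\frac{2-p-2q}{p-q+1}>0$. This $\gamma$ lies in $(0,1)$ because $p+2q<2$ with $p\ge 1$ forces $q<\frac12$, and $p<2$.

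Second, your false-positive accounting in Step 1 is wrong as written. An ``amortized cost per query of $\tilde O(n^{1-\gamma})$'' would add $\tilde O(n^{1+\gamma}\cdot n^{1-\gamma})=\tilde O(n^2)$ to the running time, which destroys the reduction. That term also silently disappears from your Step 2 bound.

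In \cite{KPP16} false positives are not paid for per query. Take $|U|=\Theta(n^{2-2\gamma})$ large enough that one pairwise-independent compression hash creates a spurious intersection between two balanced sets with probability at most $\frac12$. Then $\Theta(\log n)$ independent hashes make a witness-free query look intersecting under all of them with probability only $1/\mathrm{poly}(n)$. A union bound over the $\tilde O(n^{1+\gamma})$ queries rules out false positives w.h.p., so at most one $O(n^{1-\gamma})$ verification is ever performed.

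With these two repairs, your bound $\tilde O\bigl(n^{(2-\gamma)p}+n^{1+\gamma+(2-\gamma)q}+n^{2-\gamma}\bigr)$ is justified. The contradiction with the classical 3SUM conjecture then goes through.
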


Our tradeoff bound follows from our framework for obtaining reductions from 3OA problems 
to online SetDisjointness.

\begin{restatable}[CLB framework for online SetDisjointness]{theorem}{restateonlineframeworktheorem}
  \label{thm:clb-framework-online}
  Suppose:
  
  - for any $\gamma \in (0, 1)$ there is an efficient family of
  hash functions $h_1 : G \rightarrow R$ with $| R | = n^{\gamma}$ that is
  $n^{\beta}$-almost balanced with $\beta \in (0, 1)$ for size factor $c''$
  and $\tilde{O} (1)$-almost linear for the abelian group $(G, +_G)$ in the
  operation $+_R$, and
  
  - for any $\gamma \in (0, 1)$ there is an efficient family of
  hash functions $h_2 : G \rightarrow Q$ that is pairwise independent and
  $\tilde{O} (1)$-almost linear for $(G, +_G)$ in the operation $+_Q$, where
  $| Q | = O \left( {c''}^2 n^{2 - 2 \gamma} \right)$ and $Q$ has an efficient
  P{\u a}trașcu Split of size $O \left( \sqrt{Q} \right)$,
  
  and fix some constants $p \in \left[ \frac{1}{2}, 1 \right)$ and $q \in \left[
  0, \frac{1}{4} \right]$.
  Then any algorithm for Online SetDisjointness with $O (N^p)$ quantum
  preprocessing time and $O (N^q)$ quantum time per query which correctly
  answers queries with bounded error must have $p + 2 q \geqslant 1$ unless
  there is an $O (n^{1 - \delta})$-time bounded-error algorithm for the 3OA 
  problem on the abelian group $(G, +_G)$ with no false positives for 
  some constant $\delta > 0$.
\end{restatable}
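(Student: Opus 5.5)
The plan is to prove the contrapositive. Assume an Online SetDisjointness algorithm $\mathcal{A}$ with $O(N^p)$ preprocessing and $O(N^q)$ query time where $p+2q<1$. From it we build an $O(n^{1-\delta})$-time bounded-error algorithm for the 3OA problem on $(G,+_G)$ with no false positives. The reduction follows the \cite{KPP16} template, but every step that touches all $n$ elements is replaced by something sub-linear.

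\textbf{Step 1 (bucketing).} Use $h_1$ to split the input $S$ into $n^\gamma$ buckets $B_r=\{x\in S: h_1(x)=r\}$. Almost-balancedness ensures that, apart from at most $n^{\beta}$ overfull buckets, each bucket has size at most $c''n^{1-\gamma}$. Almost-linearity ensures that any solution $(a,b,c)$ with $a+_G b=c$ has $h_1(c)\in h_1(a)+_R h_1(b)+\Delta$ for one of $\tilde O(1)$ offsets $\Delta$. Hence only $\tilde O(n^{2\gamma})$ bucket pairs $(B_i,B_j)$ can contain a solution.

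\textbf{Step 2 (building sets).} Use $h_2$ and the P\u{a}tra\c{s}cu Split of $Q$ into $Q_1\times Q_2$ pieces, each of size $O(\sqrt{|Q|})=O(c''n^{1-\gamma})$. For each bucket $i$ and each shift $z\in Q_1$, build a set
\[
F_{i,z}=\{(h_2(x)_{\text{low}}, \text{shifted high part}) : x\in B_i\},
\]
arranged so that $F_{i,z}\cap F'_{j,z'}\neq\emptyset$ exactly when some $a\in B_i$, $b\in B_j$ satisfy $h_2(a)+_Q h_2(b)\approx h_2(c)$ for a target $c$. The two split coordinates are encoded as set elements and shifts, as in \cite{KPP16}. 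Pairwise independence of $h_2$ bounds false collisions: the expected number of spurious pairs per bucket pair is $O((c''n^{1-\gamma})^2/|Q|)=O(1)$.

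Each candidate reported by a query is verified with $O(1)$ group operations. This is why the resulting 3OA algorithm has no false positives.

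\textbf{Step 3 (costing).} The total SetDisjointness instance has size $N=\tilde O(n^{\gamma}\cdot n^{1-\gamma}\cdot\text{shifts})$. The main difficulty is that preprocessing $N^p$ must itself be sub-linear in $n$, while the instance naturally has size at least $n$.

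My first attempt at resolving this is to apply the reduction inside the $L$-subset-finding quantum walk, treating its sub-linear subset of size $m=n^{\alpha}$ as the 3OA instance on which $\mathcal{A}$ is preprocessed. With $N=\tilde O(m)$, preprocessing costs $m^p$.

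Next, Grover search is run over the $\tilde O(m^{2\gamma}\cdot\sqrt{|Q_1|})$ relevant queries, with each query costing $N^q$. The total query cost is then $\tilde O(m^{\gamma}\cdot m^{(1-\gamma)/2}\cdot m^{q})$. Choosing $\gamma$ near $0$ balances this against the preprocessing, and the $O(n^{\beta})$ overfull buckets are handled by direct Grover search.

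The exponent arithmetic should reduce to requiring $p+2q<1$ in order to beat the walk's $n$-time baseline; setting parameters gives $\delta>0$. The analyses of the skip-list data structure and the walk's time complexity would be used to show that the data structure updates needed to maintain $\mathcal{A}$'s preprocessing inside the walk stay within budget.

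I expect this last point, supporting the walk's updates while keeping the preprocessing sub-linear, to be the main obstacle. If it fails, the fallback is to re-preprocess only at the walk's setup and answer queries with amplitude amplification over the subset.
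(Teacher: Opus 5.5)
Your skeleton matches the paper: the contrapositive, $h_1$-bucketing, the P{\u a}tra\c{s}cu split, running the reduction on the subset maintained by the $L$-subset walk, and quantum search over queries. But the part that actually constitutes this theorem's proof, namely the choice of $\gamma$ and the exponent bookkeeping, is wrong.

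By your own Step~3, the instance built from a walk subset of size $m$ has $N=\tilde{O}(m^{2-\gamma})$, not $\tilde{O}(m)$. Hence preprocessing costs $m^{p(2-\gamma)}$. The search costs $m^{(1+\gamma)/2}\cdot m^{q(2-\gamma)}$: queries are indexed by pairs $(i,b)\in R'\times S$, since the partner bucket $i+h_1(b)$ and the split shifts $h_2^{\uparrow}(b),h_2^{\downarrow}(b)$ are all determined by $b$, so there are $m^{1+\gamma}$ of them. Sub-linearity in $m$ then requires $2-\frac{1}{p}<\gamma<2-\frac{1}{1-2q}$, an interval that is nonempty exactly when $p+2q<1$. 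In particular, ``$\gamma$ near $0$'' fails for every $p>\frac{1}{2}$, since preprocessing alone is then about $m^{2p}>m$.

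The paper takes $\gamma=\frac{4(p-q)-1}{2(p-q)+1}$. This gives $\frac{1+\gamma}{2(2-\gamma)}=p-q$, so both terms equal $N^{p}$, leaving slack $\delta'=\frac{1}{2-\gamma}-p=\frac{1-p-2q}{3}>0$ against the target $N^{\frac{1-\delta}{2-\gamma}}$. The assumed ranges of $p,q$ are what keep this $\gamma$ in $(0,1)$.

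The obstacle you flag at the end does not arise, and preprocessing once at setup would not work, since the subset changes along the walk. In the paper the walk maintains only the hash-bucket structure: one history-independent skip-list set per bucket plus one for the subset, all with $\tilde{O}(1)$ updates. The SetDisjointness family is never materialized; it is exposed through $\tilde{O}(1)$-time oracles on that structure. The whole inner reduction, preprocessing included, is rerun as the checking step. It therefore enters the walk lemma as $T_D(m)=m^{1-\delta}$ and yields $\tilde{O}(n^{\max(3/4,\,3/(3+2\delta))})$.

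Step~2 also needs two repairs.
\begin{enumerate}
\item With a single $h_2$, your own estimate leaves a constant probability of a spurious intersection per query, so up to a constant fraction of pairs are falsely marked. The paper samples $\eta=\tilde{O}(1)$ independent $h_2$'s and discards $(i,b)$ as soon as one of them reports disjointness, driving this probability to $1/\mathrm{poly}(n)$.
\item Online SetDisjointness returns only a bit, so there is no reported candidate to check with $O(1)$ group operations. Verifying $(i,b)$ requires a Grover search of $\mathcal{B}_i$ for $a$ with $a+b\in\mathcal{B}_{i+h_1(b)}$, costing $\sqrt{m^{1-\gamma}}$. Putting that inside an ordinary Grover marking procedure would make the search cost $m^{(1+\gamma)/2}\cdot m^{(1-\gamma)/2}=m$. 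The paper uses variable-time amplitude amplification so this cost is incurred only on the rare pairs that pass the disjointness test. This verification inside the marking is also what gives one-sided error.
\end{enumerate}
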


Note that since online SetDisjointness allows to know which queried set pairs 
are non-disjoint we can search the corresponding buckets for a 3OA witness so 
we can rule out false positives and thus give an algorithm for 3OA with 
one-sided error.

We also made two novel technical improvements that
we think will be useful for other quantum algorithms and reductions. First, we 
give the first general time complexity analysis of the $L$-subset finding quantum walk
that we know of in the literature.
\begin{lemma}[Informal version of \Cref{lem:time-of-johnson-walk}]
    For any $\delta \in \mathbb{R}$ if there is an 
$\tilde{O}(n^{1-\delta})$ time quantum algorithm for finding a specific 
type of $L$-subset with a suitable data structure, then
the $L$-subset finding quantum walk of \cite{CE05} for that type of subset
has an overall time complexity 
of $\tilde{O}(\max(N^{\frac{L}{L+2\delta}}, N^\frac{L}{L+1}))$.
\end{lemma}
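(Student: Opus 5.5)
We would prove the informal lemma by instantiating the standard MNRS/Ambainis quantum walk cost formula on the Johnson graph $J(N,r)$ and then optimizing over the subset size $r$. Recall that the $L$-subset finding walk keeps a subset $S\subseteq[N]$ of size $r$ together with a data structure $D(S)$. Its running time is
\[
\tilde{O}\!\left(\mathsf{S}(r) + \frac{1}{\sqrt{\varepsilon}}\left(\frac{1}{\sqrt{\delta_J}}\,\mathsf{U}(r) + \mathsf{C}(r)\right)\right),
\]
where $\mathsf{S}$ is the setup cost, $\mathsf{U}$ the update cost, $\mathsf{C}$ the checking cost, $\delta_J = \Theta(1/r)$ the spectral gap of $J(N,r)$, and $\varepsilon = \Omega((r/N)^L)$ the fraction of marked vertices when a witness $L$-subset exists.

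\textbf{Step 1: instantiate the three costs.} Using the skip-list data structure of \cite{Amb04, BLPS22} with our error analysis, insertion and deletion of one element cost $\tilde{O}(1)$, so $\mathsf{U}(r)=\tilde{O}(1)$. Building the structure costs $\mathsf{S}(r)=\tilde{O}(r)$. The checking step asks whether the current $r$-subset contains a witness $L$-subset, and this is exactly where the hypothesis enters. By assumption there is an $\tilde{O}(r^{1-\delta})$-time quantum algorithm for finding the $L$-subset within a set of size $r$ stored in the data structure, so $\mathsf{C}(r)=\tilde{O}(r^{1-\delta})$. Since the checking routine has bounded error, we make it coherent and reversible and amplify it to error $1/\mathrm{poly}(N)$ with a logarithmic overhead, so that its errors stay negligible across the $\tilde{O}(\sqrt{1/\varepsilon}\cdot\sqrt{r})$ walk steps.

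\textbf{Step 2: optimize $r$.} Substituting these costs gives
\[
T(r)=\tilde{O}\!\left(r + (N/r)^{L/2}\left(\sqrt{r}+r^{1-\delta}\right)\right).
\]
We split into two regimes.
\begin{itemize}
\item If $\delta\ge 1/2$, the update term dominates the checking term. Balancing $r$ against $(N/r)^{L/2}\sqrt{r}$ gives $r=N^{L/(L+1)}$ and the bound $N^{L/(L+1)}$.
\item If $\delta<1/2$, the checking term dominates. Balancing $r = N^{L/2} r^{1-\delta-L/2}$ gives $r^{L/2+\delta}=N^{L/2}$, i.e.\ $r=N^{L/(L+2\delta)}$.
\end{itemize}
Taking the maximum covers both regimes and yields $\tilde{O}(\max(N^{L/(L+2\delta)},N^{L/(L+1)}))$. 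We also need $r\le N$, which holds whenever $\delta\ge 0$. For negative $\delta$ the exponent exceeds $1$, and we cap $r$ at $N$, where the walk degenerates to running the checking procedure on the whole input. The stated bound should then be read with this convention.

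\textbf{Main obstacle.} The delicate part is Step 1, specifically justifying that $\mathsf{C}$ and $\mathsf{U}$ are genuinely time costs rather than query costs. First, the skip list must support history-independent, reversible updates with bounded failure probability over all walk steps; this is where we invoke our tighter skip-list error lemmas. Second, the assumed $L$-subset algorithm must run as a unitary using only quantum-read access to $D(S)$, and its garbage must be uncomputed. We would first try to handle this by standard amplification followed by uncompute-and-reflect. We would then bound the accumulated error by a union bound over the total number of steps, each step incurring error $1/\mathrm{poly}(N)$.
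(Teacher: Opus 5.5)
Your proposal is correct and is essentially the paper's proof: the paper runs the CE05/Ambainis walk $(W^{t_1}P)^{t_2}$ with $t_1\approx\sqrt{M}$ and $t_2\approx (N/M)^{L/2}$, which gives exactly your cost $\tilde{O}\bigl(M + (N/M)^{L/2}(\sqrt{M}\,T_U + T_D)\bigr)$, and then picks $M=N^{L/(L+2\delta)}$ or $M=N^{L/(L+1)}$ using the same case split at $\delta=1/2$ (phrasing it via MNRS is only cosmetic). One correction to your negative-$\delta$ remark, which applies equally to the paper: capping $r$ at $N$ gives cost $N^{1-\delta}$, and for example with $L=3$ and $\delta\in(-1/2,0)$ this strictly exceeds $N^{L/(L+2\delta)}$, so the bound does not hold ``under a convention'' and the lemma should be restricted to $\delta\ge 0$.
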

Also see \Cref{cor:sublinear-walk}. Second, we give a much tighter error analysis for the skip-list data structure of \cite{Amb04, BLPS22}.
\begin{lemma}[Extracted from \Cref{lem:skip-data-structure}]
    The probability that a single operation on the data structure for \cite{Amb04, BLPS22} fails can be made to be $O(1/n^d)$ for any constant $d>0$.
\end{lemma}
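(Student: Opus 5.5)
The statement to prove is the extracted lemma: each operation on the skip-list data structure of \cite{Amb04, BLPS22} fails with probability $O(1/n^d)$, for any constant $d>0$. My plan is to go back to how that structure is built and find the two places where randomness can cause a failure.

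\textbf{Where failures come from.} The structure stores a set $S$ of at most $r \le n$ items from a universe of size $\mathrm{poly}(n)$. It is a history-independent skip list. Each level-$i$ node is placed in a hash-table bucket determined by a hash function $h$ chosen once from a $k$-wise independent family. Each bucket has a fixed capacity $c$ and is padded, so that the memory layout depends only on $S$. An operation (insert, delete, search, index lookup) fails only if:
\begin{enumerate}
\item some skip-list level has an unusually long run between consecutive higher-level nodes, so the $\tilde{O}(1)$-step traversal budget is exceeded; or
\item some bucket overflows its capacity $c$.
\end{enumerate}
Levels are decided by a second hash $g$, also $k$-wise independent, so that the structure stays a deterministic function of $S$ once the hash functions are fixed. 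I will bound each failure mode by $O(n^{-d-1})$ and then take a union bound over the at most $O(n)$ relevant sets or positions touched by a single operation.

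\textbf{Step 1: bucket overflow.} Set the number of buckets to $\Theta(r)$ and the capacity to $c = \Theta(d\log n)$. The bucket load is a sum of $k$-wise independent indicators with mean $O(1)$. Apply the $k$-th moment tail bound (Schmidt--Siegel--Srinivasan) with $k = \Theta(d\log n)$. This gives $\Pr[\text{load} \ge c] \le n^{-(d+2)}$. A union bound over the $O(r)$ buckets gives overflow probability at most $O(n^{-d-1})$. The hash family costs $O(k)=O(\log n)$ time per evaluation and $\mathrm{polylog}$ seed bits, so the operation cost rises only by $\mathrm{polylog}$ factors.

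\textbf{Step 2: skip-list gaps.} Each element is promoted to level $i+1$ with probability $1/2$ under $g$. A gap of length $\ell$ at some level means $\ell$ consecutive elements were all not promoted. With $k \ge \ell$-wise independence, this has probability $2^{-\ell}$. Taking $\ell = (d+3)\log n$ and union bounding over all $O(n)$ start positions and the $O(\log n)$ levels gives probability $O(n^{-d-1})$. The same argument caps the number of levels at $O(\log n)$.

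\textbf{Step 3: combine and account for quantum use.} A single operation succeeds whenever neither event occurs. Its failure probability is therefore $O(n^{-d})$, while its running time is $O(\mathrm{polylog}\,n)$ with the constant depending on $d$. The operation runs on a superposition of sets, but the sets in the superposition are deterministic functions of the fixed hash seeds. So the error in trace distance is bounded by the square root of the weight on bad sets. If a squared amplitude bound is needed, I would replace $d$ by $2d$.

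\textbf{Main obstacle.} The main obstacle is Step 1 with only limited independence. Full-independence Chernoff bounds do not apply directly, and the $k$-th moment estimate has to be done carefully so that $k$, and with it the seed length and evaluation time, stays polylogarithmic. My first attempt would be to use the standard bound
\[
\Pr[X \ge \mu + t] \le \left(\frac{k\mu}{t^2}\right)^{k/2}
\]
with $t = \Theta(d\log n)$, which suffices since $\mu = O(1)$.
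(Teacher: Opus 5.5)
Your decomposition matches the paper's. A single operation fails only if a hash-table bucket overflows or a skip-list traversal overruns its budget. The fix is to raise the independence of the level hash to $\Theta(d\log n)$ and union bound over the at most $n$ stored elements.

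The difference is that you rebuild the structure and re-derive both tail bounds, while the paper re-derives nothing. For overflow, the paper leaves the table untouched and re-evaluates the Chernoff expression $\frac{e^{\log n-s-1}}{(\log n-s)^{\log n-s}}$ already in \cite{Amb04}. It observes that this is $n^{-\Omega(\log\log n)}$, hence below every inverse polynomial. So scaling the capacity to $\Theta(d\log n)$ and switching to a $k$-wise independent table hash are unnecessary; your own estimate already shows that mean-$O(1)$ buckets of capacity $\Theta(\log n)$ give a superpolynomially small bound. For the skip list, the paper imports the per-element bound $O(\log n/2^{d'})$ for a $d'$-wise independent level hash from \cite{Amb04, BLPS22}. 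It sets $d'=c_1\log n+1$ and union bounds over $n$ elements to get $O(1/n^{c_1-1})$. Your version buys self-containment; the paper's buys brevity, inheriting the delicate limited-independence analysis from the cited lemmas.

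If you want your version to stand on its own, two steps need repair.

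\textbf{The tail inequality.} The bound you quote is not the standard limited-independence bound in your regime. The $k$-th moment bound reads $\Pr[|X-\mu|\ge t]\le 8\left(\frac{k\mu+k^2}{t^2}\right)^{k/2}$, and with $\mu=O(1)$ the $k^2$ term dominates. You therefore need $t\ge 4k$, say. Cleaner is $\Pr[X\ge t]\le \mathbb{E}\left[\binom{X}{k}\right]/\binom{t}{k}\le \mu^k/\left(k!\binom{t}{k}\right)$, which at $k=t$ gives $(e\mu/t)^t$.

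\textbf{The gap argument in Step 2.} The $\ell$ consecutive nodes at a level $i\ge 1$ are not a fixed set, since which elements reach level $i$ depends on the hash values of all the others. So applying $\ell$-wise independence to them and union bounding over $O(n)$ start positions is valid only at level $0$, where consecutive means consecutive in the sorted order of $S$. The standard repair uses fixed windows of the sorted order. Fix a target $x$ and a level $j$, and take the $m=\Theta(d\,2^{j}\log n)$ predecessors of $x$ in $S$. A limited-independence tail bound shows that, with probability $1-n^{-\Omega(d)}$, at least one of them reaches level $j+1$ and only $O(d\log n)$ reach level $j$. This caps the horizontal steps at level $j$. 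It is exactly the per-element guarantee the paper imports from Lemma 3 of \cite{BLPS22} and Lemma 6 of \cite{Amb04}.
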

The previous analysis only gave an $O(1/n^4)$ bound on the error probability. We also explicitly state the possible operations, time complexities, space complexity, and cumulative error probability of the data structure when used in any algorithm for any general data type in \Cref{lem:skip-data-structure} and \Cref{lem:skip-data-structure-cumulative-error}. Previous works only analyzed these data structures when used to store integers as a part of a particular $L$-subset finding quantum walk. 

\subsection{Reduction strategy and techniques}

We provide reductions from 3OA on groups with good hash functions that preserve the group action to online SetDisjointness. The Abelian 3-Orthogonal Array (3OA) problem is the problem of deciding if a set or array $S$ with elements drawn from a known finite Abelian group $(G,+)$ contains three distinct elements $a,b,c$ such that $a+b=c$ (we call such a triple a 3OA witness). Thus the 3SUM problem can be defined as the 3OA problem on the group $(\mathbb{Z}_{2n^3}, +)$. Like 3SUM, 3OA can be solved in $\tilde{O}(n)$ time by an unstructured search over pairs\footnote{The Fundamental Theorem of Finite Abelian Groups guarantees that there is always an efficient algorithm for the group operation.}.

In order to show that the time complexity of 3OA depends on the time complexity of SetDisjointness we need to show a reduction that runs in sub-linear time. To do this, we first use the $L$-subset finding quantum walk to reduce the given instance of 3OA to a sub-linear number of sub-linear size instances of 3OA with the data structures we need to reduce 3OA to SetDisjointness. 

Next we must show that, given those data structures, we can reduce 3OA to SetDisjointness in $o(n)$ time where $n$ is the size of the smaller instances in the first step. The main impediment to this is that the SetDisjointness instances we use rely on the decisions of a super-linear number of queries. 
For online SetDisjointness, we use variable-time amplitude amplification to reduce the total number of queries we need to make. Using variable-time amplitude amplification instead of Grover search allows us to add the step of searching for the 3OA witness at minimal cost, which means that our reduction only has one-sided error even if the SetDisjointness oracle has two-sided error. 

Next we will give an overview of how an algorithm for SetDisjointness can be used to solve 3OA.

\subsubsection{Constructing a SetDisjointness instance}

Recall that we can solve 3OA by searching over all $a \in S$ and $b \in S$ such that $a+b \in S$. The first idea for reducing 3OA is to group the first and third inclusions into buckets $\{ \mathcal{B}_i\}_{i \in R}$ with labels from a set $R$ with size $|R| = n^\gamma$ controlled by a parameter $\gamma \in (0, 1)$, so that we are searching for a triple $b, i, j \in S \times R^2$ such that there is some $a \in \mathcal{B}_i$ where $a+b \in \mathcal{B}_j$. For each triple $(b,i,j)$ this is equivalent to asking if the sets $\{a + b | a \in \mathcal{B}_i\}$ and $\mathcal{B}_j$ are disjoint or not, achieving our goal of reducing 3OA to SetDisjointness. However, this reduction is quite inefficient: the SetDisjointness instance has size $N=O(n^2)$ since there is a set for each bucket plus $|S|=n$ additional sets per bucket for a total of $n|R|+|R|$ sets of size $n/|R|$ each, $q=n|R|^2$ disjointness queries (one for each triple), and the universe size is $|G|$. 

We can apply three improvements to the reduction with the use of linear family of hash functions. A hash function $h$ is linear for a group $(G,+)$ if it preserves the group operation, meaning that $h(a) + h(b) = h(a+b)$. For the purposes of this overview we will assume that we have access to perfectly linear hash functions which have any additional properties we find convenient. 

The first improvement we can make with a family of such functions from $G$ to $R$ is to group all the input elements with the same hash value into the same bucket. This means that, for any $b \in S$ and $i \in R$, any element $a + b$ where $a \in \mathcal{B}_i$ must be in $\mathcal{B}_{i+h(b)}$, so we can specify a query using only a pair $b,i \in S \times R$ instead of a triple which reduces the number of queries to $n |R|$.

The second improvement is to use a linear hash family to compress the universe down to $O(n^2 / R^2)$ many elements. This increases the probability of a hash collision causing a false intersection to a constant, but we can bring that down to $O(1/\tmop{poly}(n))$ sampling $O(\log n)$ many functions from the family and counting two sets as disjoint if they are disjoint under any single hash function. This increases the number of sets and queries by an $O(\log n)$ factor. 

The third improvement is to split the $+b$ operation across both $\mathcal{B}_i$ and $\mathcal{B}_j$ using what we call a ``P{\u a}trașcu Split'' of the universe. Since the size of a P{\u a}trașcu Split depends on the size of the universe it benefits greatly from universe compression so we will assume that the second improvement has already been performed and use $h$ to represent the compression hash and $Q$ to represent the compressed universe. Informally, the idea is that if any $h(b) \in Q$ can be split into $h(b)^{\uparrow}$ and $h(b)^{\downarrow}$ such that $h(b) = h(b)^{\uparrow} + h(b)^{\downarrow}$ then $a + h(b)^{\uparrow} = a + h(b) - h(b)^{\downarrow}$ so checking if $\{a + h(b)^{\uparrow} | a \in \mathcal{B}_i\}$ and $\{c - h(b)^{\downarrow} | c \in \mathcal{B}_j\}$ are disjoint is equivalent to checking if $\{a + h(b) | a \in \mathcal{B}_i\}$ and $\mathcal{B}_j$ are disjoint. For example, taking $h(b)^{\uparrow}$ and $h(b)^{\downarrow}$ to be the upper and lower half of the bits of $h(b)$ is a P{\u a}trașcu split that works for integer addition and bit-string xor. If there are only $\sqrt{|Q|}$ possible upper elements and $\sqrt{|Q|}$ possible lower elements then we only need $(n/R) R$ ``left'' sets and $(n/R) R$ ``right'' sets instead of the $nR$ sets ``left`` sets we needed before.

For any $\gamma \in (0,1)$, combining all three improvements gives a reduction from 3OA with a data structure that separates $S$ into buckets and supports checking whether any given element is in $S$ or a bucket to a SetDisjointness instance with $2n\log n$ sets each with size $O(n^{1-\gamma})$ for an instance size of $N=O(n^{2-\gamma} \log n)$, $q=n^{1+\gamma} \log n$ queries, and a universe with size $O(n^{2-2\gamma})$.

\subsection{Conjectures and open questions}

Since there are no algorithms for the 3OA problem for any specific group which significantly improve upon the generic search-based algorithm that works for 3OA on any Abelian group, and since we know that differences between groups is irrelevant for the quantum query complexity of 3OA, it is natural to conjecture that the differences between groups also do not matter for the time complexity of 3OA. 

\begin{conjecture}[3OA Conjecture]
    The 3OA problem requires $\Omega(n^{d-o(1)})$ time with the same constant $d$ for all Abelian groups. Furthermore, $d=1$ (or $d=2$ in the classical setting).
\end{conjecture}
We also conjecture that the same situation applies to Abelian $k$-Orthogonal Array problems for all $k$.

\begin{conjecture}[kOA Conjecture]
    The Abelian $k$-Orthogonal Array (kOA) problem requires $\Omega(n^{\lceil k / 2 \rceil / 2 - o(1)})$ time ($\Omega(n^{\lceil k / 2 \rceil - o(1)})$ classical time) for any Abelian group.
\end{conjecture}

\section{Preliminaries}

\subsection{Model}

We use the standard quantum circuit model with random access gates as our model of quantum computation. In this model circuits always start in the all-zeros state, input is given in the form of special oracle gates that can be queried in superposition and act as $\mathcal{O}\ket{i}\ket{r} = \ket{i}\ket{r+S_i \mod R}$ (where $R$ is the maximum integer for the register and $S$ is the array of input integers), and the output is the state of a designated register after the end of the circuit. The random access gate swaps single bit in with another bit at a certain index in a register which is formalized by the mapping $\ket{i,b,r} \rightarrow \ket{i, r_i, r_1 \ldots b \ldots}$; the random access gates are required for data-structure operations and makes the model log equivalent to the QRAM model.

For our model of classical computation we use the following restriction of our quantum model: we only allow the X and CNOT gates in our elementary gate set, which also naturally restricts the ability to query oracles in superposition. Since the random access gate is still available this model is log equivalent to the Word RAM model.

The time complexity of a circuit is the number of elementary gates it has and the space complexity is the number of wires it needs. In this paper we will mostly ignore log factors 
in our complexity analysis. If we forget to specify which model we are referring to we are 
probably referring to the quantum model. 

\subsection{A brief explanation of the quantum subroutines used in this paper}
In this paper we use Grover search \cite{Gro96}, variable-time amplitude amplification \cite{Amb12}, and the $L$-subset finding quantum walk \cite{CE05,Amb04} as subroutines. The way use these algorithms work when we use them is that they first construct a uniform superposition over a range of possible states, then, using a routine we provide to recognize desirable states, they boost the amplitude of desirable states at the expense of undesirable ones, and finally they measure the superposition to obtain a single state. \emph{If} one or more desirable states exist, then there is a high probability that one of the desirable states is obtained after measurement. Usually the routine we provided for recognizing if a state is desirable is used on the measured state to produce the final search or decision output, but we can perform some other computation on the measured state if we choose.

We assume that the reader is familiar with the time complexity of Grover search, which runs in $\tilde{O}(\sqrt{n} \cdot t_m(n))$ time where $n$ is the number of items being searched and $t_m(n)$ is the time complexity of the routine provided to decided if an item is the one we are looking for.

\subsection{Definitions}

In this subsection we give formal definitions for the problems, properties, and other terms we use in the rest of this paper. We start with the 3OA problem, which can be thought of as the 3-Orthogonal Array Problem in \cite{BS13, Spa13a} restricted to Abelian groups. 

\begin{definition}[Abelian 3-Orthogonal Array ``3OA'' Problem]
  ~
  
  Input: an index oracle for $S$, an array of elements from an Abelian group
  $(G, +)$, with instance size $n = | S |$.
  
  Decision Output: YES if and only if $\exists x, y \in S$ s.t. $(x + y) \in
  S$.
  
  Search Output: $x, y \in S$ s.t. $(x + y) \in S$.
\end{definition}
It is known that size-$n$ 3OA instances on groups of arbitrary size can be reduced to instances on groups of $O(n^3)$ size via hashing if the group has (almost) linear hashes; see the appendix of \cite{JV16} for an example of this. Next we give definitions for two well-studied examples of the 3OA problem. 
\begin{definition}[3SUM Problem]
  3OA with group $(\mathbb{Z}_{2 n^3}, +_{2 n^3})$.
\end{definition}
It is known that this definition of 3SUM is equivalent to $(\mathbb{Z}, +)$ for the reason above.
\begin{definition}[\cite{JV16} 3XOR Problem]
  3OA with group $(\mathbb{Z}_2^{3 \log n}, \oplus)$ where $\oplus$ denotes
  element-wise mod-2 addition.
\end{definition}

Next we define the convolution variant of the 3OA problem. 

\begin{definition}[Convolution 3OA ``Conv3OA'' Problem]
  ~
  
  Input: an index oracle for $S$, an array of elements from an Abelian group
  $(G, +_G)$, with instance size $n = | S |$.
  
  Decision Output: YES if and only if there exist distinct 
  $i, j \in [|S|]$ s.t. $S_i +_G S_j = S_{i+j}$.
  
  Search Output:  $i, j \in [|S|]$ s.t. $S_i +_G S_j = S_{i+j}$.
\end{definition}
Note that we use integer addition for the indices.

Next we switch from defining problems to defining the terms and properties used in our reduction framework, starting with the hash function properties needed for the improvements we mentioned in the techniques section.
\begin{definition}[$\alpha$-almost balanced hash family]
  A family of hash functions $h_i : G \rightarrow R$ is $\alpha$-balanced if,
  for some fixed constant $c''$, for any $S \subseteq G$ the expected number
  of elements $x \in S$ such that $| \{ y = h (x) |y \in S \} | > c''  \frac{|
  S |}{| R |}$ is $\alpha$. We call $c''$ the size factor of the family. 
\end{definition}

\begin{definition}[$\alpha$-almost linear hash function]
  A function $h : G \rightarrow R^{\alpha}$ is almost linear for the group
  $(G, +_G)$ if $(R, +_R)$ is a quasigroup and for all $a, b \in G$ there
  exists some $i \in [\alpha]$ such that $h (a)_0 +_R h (b)_0 = h (a +_G
  b)_i$.
\end{definition}
When we say that a hash function is ``almost linear'' while omitting 
the parameter $\alpha$ we mean that the function is $\tilde{O}(1)$-almost linear. 
The next property is one we need for the \emph{range} of the hash family we 
use to compress the universe of SetDisjointness elements.

\begin{definition}[P{\u a}trașcu Split]
  A P{\u a}trașcu Split of a quasi-group $(Q, +)$ is a pair sets $Q^{\uparrow}$ and
  $Q^{\downarrow}$ with the following three operations: a bijection between
  $Q$ and $Q^{\uparrow} \times Q^{\downarrow}$, \ $\pm^{\uparrow} : Q \times
  Q^{\uparrow} \rightarrow Q$, and $\pm^{\downarrow} : Q \times Q^{\downarrow}
  \rightarrow Q$. We invoke the bijection using arrow superscripts like $c
  \leftrightarrow (c^{\uparrow}, c^{\downarrow})$. The operations
  $\pm^{\uparrow}$ and $\pm^{\downarrow}$ have the property that for $a, b, c
  \in Q$, $a \pm^{\uparrow} b^{\uparrow} = c \pm^{\downarrow} b^{\downarrow}$
  if and only if $a + b = c$.
  
  The size of the split is $\max (| Q^{\uparrow} |, | Q^{\downarrow} |)$.
\end{definition}

For convenience, we give a name to the data structure we need to reduce 3OA to 
SetDisjointness in sub-linear time.
\begin{definition}[Hash-bucket data structure]
  A hash-bucket data structure for a hash function $h_1$ is a data structure
  that stores the elements from some set $S$ and the sets $\mathcal{B}_i
  \assign \{ x \in S|h_1 (x) = i \}$ which we call buckets which support the
  following operations in $\op{\tilde{O} (| S |)}$ time:
  \begin{enumerateroman}
    \item Access to the elements of each $\mathcal{B}_i$ by index (and access
    to each $| \mathcal{B}_i |$)
    
    \item Membership oracles for each $\mathcal{B}_i$
    
    \item Access to the elements of $S$ by index (and access to $| S |$)
    
    \item A membership oracle for $S$.
  \end{enumerateroman}
\end{definition}

\subsection{Conjectures}

Below we reproduce the widely-conjectured hardness of the 3SUM problem in the quantum and classical settings. In combination with our reductions, they can be used to provide conditional lower bounds on other problems.

\begin{conjecture}[\cite{AL20} Quantum 3SUM conjecture]
    The 3SUM problem cannot be solved in $O(n^{1-\delta})$ quantum time for any constant $\delta > 0$.
\end{conjecture}

\begin{conjecture}[\cite{Vas15} Classical 3SUM conjecture (modern version)]
    The 3SUM problem cannot be solved in $O(n^{2-\delta})$ classical time for any constant $\delta > 0$.
\end{conjecture}

\section{Our reduction framework}

In this section we describe our framework for reducing 3OA to
online SetDisjointness.

\subsection{Data structure and time complexity lemmas}

Our reductions require organizing the input 3OA elements into a hash-bucket data structure to work and they must run in sub-linear time, but placing all the input elements into a data structure requires $\tilde{O}(1)$ time. To solve this problem, our reductions use a quantum walk with a history-independent data structure to reduce the input 3OA instance with no additional structure to a sub-linear number of sub-linear sized 3OA instances with inputs that are organized in a hash-bucket data structure, which we can then solve by our ``inner'' reduction to SetDisjointness.

We start by showing that there is a history-independent ordered set which supports the operations we need to implement our hash-bucket data structure.

\begin{lemma}[Existence of a history-independent ordered set]
  \label{lem:skip-data-structure}
  
  There is an history independent data structure $D$ which, when given
  \begin{itemizedot}
    \item a comparison function for some universe $U$,
    
    \item a unique number with $\Omega (\log n)$ bits for each element of
    $U$,
    
    \item a promise that at most $n$ unique elements of $U$ will be stored in
    it,
    
    \item a promise that at most $n^q$ elements will be stored in it at once
    with $q \in [0, 1)$,
    
    \item and a constant $c_1$, 
  \end{itemizedot}
  allows the following operations in $O (\log^4 n)$ time:
  \begin{enumeratenumeric}
    \item Inserting an element into $D$
    
    \item Removing an element from $D$
    
    \item retrieving the $i^{\tmop{th}}$ smallest element in $D$ according to
    the ordering given by the comparison function
    
    \item Checking if a given element is in $D$.
  \end{enumeratenumeric}
  The probability that any one operation fails is $O \left( \frac{1}{n^{c_1 -
  1}} \right)$, and the total space used by the data structure is $\tilde{O}
  (n^q)$.
\end{lemma}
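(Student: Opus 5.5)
We build $D$ as Ambainis's skip list \cite{Amb04}, in the form used by \cite{BLPS22}, augmented so that rank queries are supported. Each stored element $x$ receives a level $\ell(x)\in\{0,\dots,\lceil\log n\rceil\}$. The level is computed deterministically from the unique $\Omega(\log n)$-bit number of $x$ through a hash function drawn once from a $\Theta(c_1\log n)$-wise independent family. Under that family, $\Pr[\ell(x)\ge k]=2^{-k}$ up to negligible error. Because the level of $x$ depends only on $x$ and the shared random seed, the skip list's shape depends only on the set currently stored, not on the order of insertions, and this gives the history independence. For the physical layout we follow \cite{Amb04} and store nodes in a hash table indexed by element identifier. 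The table has $\Theta(n^q)$ buckets, each a fixed-capacity array of capacity $\Theta(\log n)$, so the memory contents are also a function of the stored set. Each forward pointer at level $k$ also records the number of level-$0$ elements it skips, which is the standard indexable skip list augmentation. This is what supports operation 3, retrieving the $i$-th smallest element.

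The operations all follow search paths. For insert, delete and membership we search from the top level down with the comparison function. At each level we walk forward while the next key is smaller, then update or read pointers and skip counts along the path. For rank retrieval we walk forward as long as the accumulated skip counts stay below $i$. Each pointer dereference costs one hash table lookup, which scans a bucket of $O(\log n)$ entries. Each entry takes $O(\log n)$ time to compare using random access gates on $O(\log n)$-bit words, so a lookup costs $O(\log^2 n)$. There are $O(\log n)$ levels. The path is therefore $O(\log^4 n)$ in total provided every level has $O(\log n)$ forward steps, and the hash table and skip-count updates on the path fit within this bound. The space is $O(n^q\log n)$ buckets' worth of words, which is $\tilde O(n^q)$.

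The main obstacle is the error bound: we must show that a single operation fails with probability $O(1/n^{c_1-1})$. An operation fails exactly when one of three things happens: some level of the current set holds a run of more than $c_1\log n$ consecutive elements that do not rise to the next level, some hash bucket overflows its capacity, or the maximum level is exceeded. For one fixed level and one fixed starting position, a run of length $c_1\log n$ has probability at most $2^{-c_1\log n}=n^{-c_1}$, and this holds under the $\Theta(c_1\log n)$-wise independence. A union bound over the at most $n$ possible starting positions among the relevant elements gives $O(n^{1-c_1})$ per level; the $\log n$ levels are absorbed by choosing the bucket-run constant slightly larger. Bucket overflow is handled by Chernoff-type tail bounds for $k$-wise independent variables. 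With at most $n^q$ items in $\Theta(n^q)$ buckets, a bucket reaches size $\Theta(c_1\log n)$ with probability $n^{-c_1}$, and a union bound over buckets again yields $O(n^{1-c_1})$. Care is needed to state the bound per operation on the current set, rather than over all sets, because the union over the $n$ possible elements is where the exponent loses $1$. I expect matching this exponent exactly, with the limited-independence tail bounds, to be the delicate step. Finally, we truncate every loop at its worst-case length so the circuit has fixed size, and we flag failure when a truncation occurs; this keeps each operation a fixed-time reversible circuit, as required for use inside the quantum walk.
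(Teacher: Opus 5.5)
Your proposal is correct and follows essentially the same route as the paper. Both use the hash-table-plus-skip-list structure of \cite{Amb04,BLPS22} with levels drawn from a $\Theta(c_1\log n)$-wise independent family, bound bucket overflow and over-long skip-list traversal separately, and lose one power of $n$ in $O(1/n^{c_1-1})$ through a union bound over the fewer than $n$ stored elements; the paper just cites the per-element $O(\log n/2^{d})$ bound with $d=c_1\log n+1$ from \cite{BLPS22} and Ambainis's $n^{-\log\log n}$ overflow bound instead of re-deriving them, and your third failure mode (maximum level exceeded) is vacuous because you cap levels at $\lceil\log n\rceil$, which you need, since an uncapped level exceeding $\log n$ would occur with probability about $1/n$ per element.
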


\begin{proof}
  \cite{BLPS22}, using and extending \cite{Amb04}, show most of this statement 
  implicitly, with
  the sole exception that they give the error probability for one operation as
  $O \left( \frac{\log n}{n^4} \right)$. We will now remedy this with a
  tighter analysis.
  
  The data structure in question consists of a hash table and skip list.
  There are two ways that an operation can fail: first one of the buckets in
  the hash table can run out of entries, second accessing an entry of the skip
  list could take too much time.
  
  The probability of the first failure is given in the proof of lemma 6 of
  \cite{Amb04} as $\frac{e^{\log n - s - 1}}{(\log n - s)^{\log n - s}}$, where
  $s$ is some constant. \cite{Amb04} simply states that this is in $o (1 / n^4)$,
  but a tighter analysis shows that this is
  \[ O \left( \left( \frac{e}{\log n} \right)^{\log n} \right) = O \left(
     \left( \frac{1}{2^{\log \log n}} \right)^{\log n} \right) = O \left(
     \frac{1}{n^{\log \log n}} \right), \]
  which is $o \left( \frac{1}{n^{c_1}} \right)$ for any constant $c_1$.
  
  The probability of the second failure for a fixed index or element in $D$ is
  given in Lemma 3 of \cite{BLPS22} and Lemma 6 of \cite{Amb04} as $O \left(
  \frac{\log n}{2^d} \right)$, where $d = c_1 \log n + 1$ for some constant
  $c_1$; this comes from the fact that the skip list uses hash functions from
  a $d$-wise indepent family to decide the number of levels that a node is
  stored in. They choose the specific value $c_1 = 4$, however it is known
  that any constant value can be used (see Theorem 1 of \cite{Amb04} or Footnote
  9 of \cite{BLPS22}). By substituting the general value of $d$ we get the
  following tighter bound: $O \left( \frac{\log n}{n^{c_1}} \right)$. Since
  there are at most $n^q$ indices or elements in $d$ and $q < 1$, by the union
  bound, the probability of the second type of failure without fixing the
  index or element involved is at most
  \[ O \left( n \cdot \frac{1}{n^{c_1}} \right) = O \left( \frac{1}{n^{c_1 -
     1}} \right) \]
  By the union bound the probability that one or both failures occur in a
  single operation is given by $O \left( \frac{1}{n^{c_1 - 1}} \right)$ (since
  the probability for the first type of failure is asymptotically smaller).
\end{proof}

Next we find out how to calculate the probability that our reductions are 
affected by a single error in an operation on the ordered set. 

\begin{lemma} \label{lem:skip-data-structure-cumulative-error}
  The distance between the quantum state representing the data structure of
  \Cref{lem:skip-data-structure} after
  $t$ perfect operations and the quantum state after $t$ operations with error
  probability $O (\varepsilon)$ is $O \left( t \sqrt{\varepsilon} \right)$.
\end{lemma}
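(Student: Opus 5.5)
The plan is a standard hybrid argument. Each operation on the data structure of \Cref{lem:skip-data-structure} is modelled as a unitary $U_k$ acting on the register that holds the data structure, together with any ancillas (the hash table, skip list and $d$-wise independent hash seeds). Its ideal counterpart is $\tilde U_k$, which implements the operation correctly on every basis state. We have $\tilde U_k\ket{\phi}=U_k\ket{\phi}$ on all basis states where the operation succeeds. On the failing part of the superposition, $U_k$ may act arbitrarily; in the worst case that part is mapped to something orthogonal.

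\textbf{Single-step bound.} First I bound $\|(U_k-\tilde U_k)\ket{\psi}\|$ for the relevant state $\ket{\psi}$. Write $\ket{\psi}=\ket{\psi_{\mathrm{good}}}+\ket{\psi_{\mathrm{bad}}}$, where $\ket{\psi_{\mathrm{bad}}}$ is the projection onto basis states (choices of random hash seeds and stored contents) on which the operation fails. Failure here means a hash-table bucket overflows or a skip-list traversal exceeds its time budget. Since the seeds are in uniform superposition, $\|\ket{\psi_{\mathrm{bad}}}\|^2$ equals the failure probability, which is $O(\varepsilon)$ by \Cref{lem:skip-data-structure}. Both unitaries agree on the good part, so
\[
\|(U_k-\tilde U_k)\ket{\psi}\| = \|(U_k-\tilde U_k)\ket{\psi_{\mathrm{bad}}}\| \le 2\|\ket{\psi_{\mathrm{bad}}}\| = O(\sqrt{\varepsilon}).
\]

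\textbf{Telescoping over $t$ steps.} Next I use the hybrid decomposition
\[
U_t\cdots U_1\ket{\psi_0}-\tilde U_t\cdots\tilde U_1\ket{\psi_0}
=\sum_{k=1}^{t} U_t\cdots U_{k+1}\,(U_k-\tilde U_k)\,\tilde U_{k-1}\cdots\tilde U_1\ket{\psi_0}.
\]
Unitarity of the prefix $U_t\cdots U_{k+1}$ and the triangle inequality reduce this to $\sum_k \|(U_k-\tilde U_k)\ket{\psi^{(k-1)}}\|$. Here $\ket{\psi^{(k-1)}}$ is the \emph{ideal} state after $k-1$ perfect operations. Each term is $O(\sqrt{\varepsilon})$ by the single-step bound, so the total distance is $O(t\sqrt{\varepsilon})$.

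\textbf{Main obstacle.} The step needing care is the per-step bound: it must be applied to the ideal intermediate state $\ket{\psi^{(k-1)}}$ rather than to the actual erroneous one. The telescoping is arranged precisely so that this holds. One must then check that the failure probability bound of \Cref{lem:skip-data-structure} really holds for $\ket{\psi^{(k-1)}}$. This follows from history independence. The ideal state after any sequence of perfect operations is a fixed superposition, uniform over the hash seeds, of the canonical representation of the stored set. So the weight on failing seeds equals the classical failure probability for that stored set (and that queried element or index), even when the stored contents are themselves in superposition. I would argue this by conditioning on each branch of the stored-set register and averaging.
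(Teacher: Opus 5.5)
Your proposal is correct and is essentially the argument the paper defers to. That is the hybrid (telescoping) argument behind Lemmas 1 and 2 of \cite{BLPS22}, going back to \cite{Amb04}: it bounds each step's deviation on the \emph{ideal} intermediate state by $O(\sqrt{\varepsilon})$ via the weight of the failing branches, sums over the $t$ steps, and has the general $\varepsilon$ substituted for their $O(\log n/n^4)$.

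The only cosmetic difference is that you purify the random seeds into a uniform superposition, whereas the cited analysis draws the seed classically and averages the distance over it. Your version yields the same bound, and in fact a marginally stronger one.
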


\begin{proof}
  Given by substituting the general probability into the relevant parts of the
  proofs for Lemmas 1 and 2 in \cite{BLPS22}.
\end{proof}

We finish with a lemma that allows us to derive the overall time complexity 
and success probability of our reduction using what we know about our data 
structure and our inner reduction. 

\begin{lemma}[Time complexity of the $L$-subset finding walk]
\label{lem:time-of-johnson-walk}
  Suppose that:
  \begin{enumerate}
    \item When given access to the history-independent data structure $D [V]$
    on $V$, there is a quantum algorithm which solves an L-Subset Finding
    Problem with probability $1 - \frac{1}{\tmop{poly} (n)}$ on a set $V$ with
    $M$ elements in time $T_D (M)$.
    
    \item $D [V]$ can be constructed in time $T_S (M)$.
    
    \item $D [V]$ can be updated to add or remove one element in $V$ in $T_U
    (M)$ time.
  \end{enumerate}
  Further suppose that $T_D (M) = M^{1 - \delta}$ for some $\delta \in
  \mathbb{R}$, $T_S (M) = \tilde{O} (M)$, and $T_U (M) = \tilde{O} (1)$.
  
  Then there is a quantum algorithm that solves the same L-Subset Finding
  Problem on a set $S$ with $N$ elements with probability $1 - o (1)$ in
  sub-linear time given access only to an index oracle for $S$. More
  specifically, when $\delta < \frac{1}{2}$ there is an algorithm with time
  $\tilde{O} \left( {N^{\frac{L}{L + 2 \delta}}}  \right)$; otherwise there is
  an algorithm with time $\tilde{O} \left( {N^{\frac{L}{L + 1}}}  \right)$.
\end{lemma}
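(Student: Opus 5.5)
We plan to instantiate the standard MNRS/Ambainis quantum walk on the Johnson graph $J(N,M)$, whose vertices are $M$-subsets $V\subseteq S$ each carrying the history-independent structure $D[V]$ of \Cref{lem:skip-data-structure}. The cost of such a walk is
\[
\tilde{O}\Bigl(\mathsf{S} + \tfrac{1}{\sqrt{\epsilon}}\bigl(\tfrac{1}{\sqrt{\Delta}}\,\mathsf{U} + \mathsf{C}\bigr)\Bigr),
\]
where $\mathsf{S}$ is the setup cost, $\mathsf{U}$ the update cost, $\mathsf{C}$ the checking cost, $\Delta$ the spectral gap and $\epsilon$ the fraction of marked vertices. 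A vertex is marked if $V$ contains a witness $L$-subset. For $J(N,M)$ with $M\le N/2$ we have $\Delta=\Theta(1/M)$ and $\epsilon=\Omega((M/N)^L)$. The parameters are:
\begin{itemize}
  \item setup $\mathsf{S}=T_S(M)=\tilde{O}(M)$, obtained by querying $M$ indices in uniform superposition and inserting them into $D$;
  \item update $\mathsf{U}=T_U(M)=\tilde{O}(1)$, one insertion and one deletion plus one oracle query;
  \item checking $\mathsf{C}=T_D(M)=M^{1-\delta}$, running the hypothesised inner algorithm on $D[V]$.
\end{itemize}
History independence is needed so that the state $\ket{V}\ket{D[V]}$ depends only on $V$, which keeps the walk's reflections correct. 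Substituting, the total time is
\[
\tilde{O}\Bigl(M + (N/M)^{L/2}\bigl(\sqrt{M} + M^{1-\delta}\bigr)\Bigr).
\]

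Next we optimise over $M$. If $\delta\ge 1/2$, then $M^{1-\delta}\le\sqrt{M}$ and the update term dominates. Balancing $M=(N/M)^{L/2}\sqrt M$ gives $M=N^{L/(L+1)}$, the usual bound $\tilde{O}(N^{L/(L+1)})$. If $\delta<1/2$, the checking term dominates the update term. Balancing $M=(N/M)^{L/2}M^{1-\delta}$ gives $M^{\delta+L/2}=N^{L/2}$, so $M=N^{L/(L+2\delta)}$, and the total is $\tilde{O}(N^{L/(L+2\delta)})$. We also check that this choice satisfies $M\le N$, which holds whenever $\delta\ge 0$. For $\delta<0$ the bound exceeds $N$ and the statement is vacuous beyond the trivial range, so we cap $M$ at $N/2$ and note that the claim is only meaningful for $\delta>0$.

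Then we handle errors. The checking subroutine errs with probability $1/\mathrm{poly}(n)$ per call, and each data-structure operation errs with probability $O(1/n^{c_1-1})$ by \Cref{lem:skip-data-structure}. The walk performs $t=\tilde{O}(N)$ operations in total; a crude polynomial bound suffices. By \Cref{lem:skip-data-structure-cumulative-error}, the accumulated state distance is $O(t\sqrt{\varepsilon})$, which is $o(1)$ once $c_1$ is chosen as a large enough constant. The imperfect checking procedure is handled the same way: we either amplify it to error $1/\mathrm{poly}$ small enough to union bound over the $\tilde{O}((N/M)^{L/2})$ checking calls, or treat it as a perturbed reflection whose error adds linearly in operator norm. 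This gives overall success probability $1-o(1)$.

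We expect the main obstacle to be justifying the use of an erroneous checking routine and erroneous data-structure operations inside the reflections of the walk. This requires converting each one's per-call failure probability into a bound on the operator-norm deviation of the unitary, as in Lemmas 1 and 2 of \cite{BLPS22}. We will then sum these deviations over all applications. The complexity calculation itself is routine once the walk framework is invoked.
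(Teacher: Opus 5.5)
Your proposal is correct and is essentially the paper's argument. The paper runs the CE05/Ambainis Johnson-graph walk $(W^{t_1}P)^{t_2}$ with $t_1\approx\frac{\pi}{2}\sqrt{M/L}$ and $t_2\approx\frac{\pi}{2}(N/M)^{L/2}$, obtaining exactly your cost $\tilde{O}\bigl(M+(N/M)^{L/2}(\sqrt{M}\,T_U(M)+T_D(M))\bigr)$ and the same two choices $M=N^{L/(L+2\delta)}$ and $M=N^{L/(L+1)}$, so casting it in MNRS setup/update/checking form only changes the language.

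Your caveat that the sub-linear claim needs $\delta>0$ is justified, since the paper's own choice $q=L/(L+2\delta)$ requires $q<1$. Your operator-norm treatment of the faulty checking reflection is somewhat more careful than the paper's $(1-1/\mathrm{poly}(n))^{t_2}$ product bound; the paper defers data-structure errors to \Cref{lem:red-to-online}.
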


\begin{proof}
  \cite{CE05} show that a quantum algorithm $(W^{t_1} P)^{t_2}$ applied to the
  state $| s \rangle$ solves an $L$-Subset Finding problem on $S$, where $P$
  is a unitary which flips the phase of the state $| V, D [V] \rangle$ if and
  only if $S$ contains a valid $L$-subset, $W$ is a walk unitary that can be
  constructed using a unitary $U_{\tmop{update}}$ which takes $| D [S], x
  \rangle$ to $| D [S'], x \rangle$ where $S' = S \cup \{ x \}$ or $S' = S
  \setminus \{ x \}$ depending on the size of $S$, and $| s \rangle$ is the
  uniform superposition over the states $| V, D [V] \rangle \sum_{x \in S : x
  \nin V} | x \rangle$ where $V$ is a suitable representation of a size $M =
  N^q$ subset of $S$ with $q \in (0, 1)$. With $t_1 = \left\lfloor
  \frac{\pi}{2} \sqrt{M / L} \right\rceil$ and $t_2 = \left\lfloor
  \frac{\pi}{2} (N / M)^{L / 2} \right\rceil$, the probability that this
  algorithm fails is $O (1 / M + M / N) = O (1 / \tmop{poly} (N))$.
  
  It is straightforward to implement the unitary $P$ in O ($T_D (M)$)  time
  using the given data structure dependent quantum algorithm, the unitary
  $U_{\tmop{update}}$ in $O (T_U (M))$ time using the data structure's remove
  and add operations once each, and the unitary $| V, 0 \rangle \mapsto | V, D
  [V] \rangle$ in $O (T_S (M))$ time using the data structure's constrution
  operation. The remaining parts of $W$ can be implemented in $\tilde{O} (1)$
  time and the remaining parts of $| s \rangle$ can be computed in $\tilde{O}
  (M)$ time, so the total time complexity of the algorithm is $\tilde{O} (M +
  T_S (M) + t_1 t_2 T_U (M) + t_2 T_D (M))$, which is
  \[ \tilde{O} \left( M + T_S (M) + \sqrt{M}  \sqrt{(N / M)^L} T_U (M) +
     \sqrt{(N / M)^L} T_D (M) \right) \]
  . Rewritting in terms of $N$ gives $\tilde{O} (N^q + N^{L / 2 + q (1 - L) /
  2} + N^{L / 2 + q (1 - \delta - L / 2)})$. This is clearly sub-linear for
  all $1 > q > \max \left( \frac{L - 2}{L - 1}, \frac{L - 2}{L - 2 + 2 \delta}
  \right)$. If $\delta < \frac{1}{2}$ then we can pick $q = \frac{L}{L + 2
  \delta}$ which gives the desired bound of $\tilde{O} \left( N^{\frac{L}{L +
  2 \delta}} + N^{\left( \frac{L}{(L + 2 \delta)}  \left( \frac{1}{2} + \delta
  \right) \right)} {+ N^{\frac{L}{L + 2 \delta}}}  \right) = \tilde{O} \left(
  {N^{\frac{L}{L + 2 \delta}}}  \right)$. If $\delta \geqslant \frac{1}{2}$
  then we can pick $q = \frac{L}{L + 1}$ which gives the desired bound of
  $\tilde{O} \left( N^{\frac{L}{L + 1}} + N^{\frac{L}{L + 1}} + N^{\frac{L}{L
  + 1}  \left( \frac{3}{2} - \delta \right)} \right) = \tilde{O} \left(
  N^{\frac{L}{L + 1}} \right)$.
  
  The probability that the algorithm does not fail due to a failure of the
  given quantum algorithm is lower bounded by $\left( 1 - \frac{1}{\tmop{poly}
  (n)} \right)^{t_2}$, so the overall probability of success is
  \[ \left( 1 - \frac{1}{\tmop{poly} (n)} \right)^{t_2} \left( 1 -
     \frac{1}{\tmop{poly} (n)} \right) = 1 - o (1) . \]
\end{proof}

Note that in the final condition gives an algorithm that is tight for the
query lower bound given by \cite{BS13}. Note also that our analysis works for all
$\delta$, including $\delta = 0$ and negative $\delta$; however for such
values of $\delta$ the quantum walk does not give a time complexity
improvement for the problems in this paper over building the data structure by
iterating over the input elements and then running the algorithm that relies
on the data structure.

The following corollary immediately follows from the lemma:

\begin{corollary} \label{cor:sublinear-walk}
  If an $L$-subset finding problem can be solved in sub-linear time (quantum or classical)
  using a data structure with $\tilde{O}(1)$-time add and remove operations,
  then there is a sub-linear-time quantum algorithm that solves that $L$-subset 
  finding problem without additional structure. 
\end{corollary}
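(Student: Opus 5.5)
The plan is to instantiate \Cref{lem:time-of-johnson-walk} with the history-independent ordered set of \Cref{lem:skip-data-structure} as the data structure $D[V]$. We then check that the hypotheses of the lemma hold with some $\delta>0$, which yields a running time that is strictly sub-linear in $N$.

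\textbf{Step 1: build the data structure.} Take an $L$-subset finding problem that, given a data structure on a set $V$ of size $M$, is solvable in time $T_D(M)=\tilde{O}(M^{1-\delta})$ for some constant $\delta>0$. This is what ``sub-linear'' means here. If the algorithm is classical, we view it as a quantum circuit in the restricted model, so it is also a quantum algorithm in the same time. We realize the required data structure, with its $\tilde{O}(1)$ add and remove operations, on top of \Cref{lem:skip-data-structure}. Each operation of that lemma costs $O(\log^4 n)$, so $T_U(M)=\tilde{O}(1)$. Building $D[V]$ by $M$ successive insertions gives $T_S(M)=\tilde{O}(M)$. History independence is exactly what \cite{CE05}'s walk needs for the states $\ket{V,D[V]}$ to be well defined.

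\textbf{Step 2: boost the inner algorithm's success probability.} The inner algorithm may have only bounded error. We repeat it $O(\log n)$ times and take the majority, which gives success probability $1-1/\tmop{poly}(n)$ at a logarithmic overhead. The data structure's own per-operation error is handled by choosing the constant $c_1$ in \Cref{lem:skip-data-structure} large. The walk performs at most $\tilde{O}(N)$ operations, say, so by \Cref{lem:skip-data-structure-cumulative-error} the accumulated deviation is $O(t\sqrt{\varepsilon})=O(N\cdot n^{-(c_1-1)/2})=o(1)$ once $c_1$ is a large enough constant.

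\textbf{Step 3: apply the lemma.} All hypotheses of \Cref{lem:time-of-johnson-walk} now hold. We obtain time $\tilde{O}(N^{L/(L+2\delta)})$ if $\delta<\tfrac12$ and $\tilde{O}(N^{L/(L+1)})$ otherwise. Both exponents are strictly less than $1$ because $\delta>0$ and $L$ is a constant. The resulting algorithm needs only an index oracle for $S$, so it uses no additional structure.

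\textbf{Main obstacle.} The main subtlety is in Step 1. The corollary's hypothesis speaks of an arbitrary data structure with fast add and remove, whereas the lemma needs a history-independent one. If the given structure is not history independent, we must show that its contents (for example, the hash buckets) can be maintained as ordered sets in the skip-list structure of \Cref{lem:skip-data-structure} with only polylogarithmic slowdown. This is what I would argue first. One stores each bucket under a comparison function keyed by (hash value, element), so that index access and membership within a bucket both reduce to the operations of that lemma.
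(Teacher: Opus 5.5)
Your proposal is correct and is essentially the paper's argument: the paper treats the corollary as an immediate instance of \Cref{lem:time-of-johnson-walk} with $T_D(M)=M^{1-\delta}$ for a constant $\delta>0$, $T_U(M)=\tilde{O}(1)$ and $T_S(M)=\tilde{O}(M)$, so both exponents $\frac{L}{L+2\delta}$ and $\frac{L}{L+1}$ are below $1$, and your boosting and data-structure-error bookkeeping is extra care the paper does not spell out at this point. On your ``main obstacle'', the paper gives no conversion argument and implicitly takes the structure in the hypothesis to be the lemma's history-independent $D[V]$; you should adopt that reading, because your ordered-set workaround covers only bucket-like structures built from ordered sets, not an arbitrary data structure with fast add and remove.
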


\subsection{Framework for online SetDisjointness}

Now we are ready to describe our reduction framework for online SetDisjointness. 
We start with our reduction from 3OA with a hash-bucket data structure for a fixed 
value of $\gamma$ to online SetDisjointness. This will be the inner reduction 
used by the quantum walk.

\begin{lemma}
  \label{lem:red-with-ds-to-online}
  Let $(G, +)$ be an abelian group and $\gamma \in (0, 1)$ be a parameter.
  Suppose there is a family of hash functions $h_1 : G \rightarrow R$ that is
  $n^{\beta}$-almost balanced and $\tilde{O} (1)$-almost linear for $(G, +)$
  and a family of hash functions $h_2 : G \rightarrow Q$ that is pairwise
  independent and $\tilde{O} (1)$-almost linear for $(G, +)$ where $| R | =
  n^{\gamma}$, $| Q | = c' n^{2 - 2 \gamma}$ for sufficiently large $c' \in
  \tilde{O} (1)$, $Q$ has a P{\u a}trașcu split of size $O \left( \sqrt{| Q
  |} \right)$, and $h_1$ and $h_2$ can be computed in $\tilde{O} (1)$ quantum
  time.
  
  Then there is a reduction from 3OA for $(G, +)$ with a hash-bucket data
  structure for $h_1$ to Online SetDisjointness with bounded error on
  $\tilde{O} (n)$ sets with $O (n^{1 - \gamma})$ elements each from a universe
  with $O (n^{2 - 2 \gamma})$ elements, where $n$ is the size of the 3OA
  instance. The reduction solves 3OA for $(G, +)$ in $\tilde{O} (n^{(1 +
  \beta) / 2} + t_p (N) + n^{(1 + \gamma) / 2} t_q (N))$ time where $N =
  \tilde{O} (n^{2 - \gamma})$ is the size of the Online SetDisjointness
  instance and $t_p (N)$ and $t_q (N)$ are respectively the preprocesing and
  query complexities for an Online SetDisjointness algorithm. The reduction
  produces false negatives with $\frac{1}{\tmop{poly} (n)}$ probability and no
  false positives.
\end{lemma}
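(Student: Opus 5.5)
We build the SetDisjointness instance implicitly from the hash-bucket data structure for $h_1$, hand it to the Online SetDisjointness algorithm's preprocessing, and then use variable-time amplitude amplification over the query space $S \times R$ to find a non-disjoint query. A non-disjoint query is then checked for a genuine 3OA witness by a Grover search inside the corresponding buckets.

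\textbf{Construction.} First sample $k = O(\log n)$ functions $h_2^{(1)},\dots,h_2^{(k)}$ from the pairwise independent family. For each bucket label $i \in R$, each $u \in Q^{\uparrow}$ and each $\ell \in [k]$, define a left set $L_{i,u,\ell} = \{ h_2^{(\ell)}(a) \pm^{\uparrow} u : a \in \mathcal{B}_i\}$. For each $j \in R$, $w \in Q^{\downarrow}$ and $\ell$, define a right set $R_{j,w,\ell} = \{ h_2^{(\ell)}(c) \pm^{\downarrow} w : c \in \mathcal{B}_j\}$. Almost-linearity with $\tilde O(1)$ slack means each $h_2^{(\ell)}(c)$ is replaced by its $\tilde O(1)$ candidate values, and likewise for the $h_1$-shift $i \mapsto i + h_1(b)$. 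This costs only polylog factors.

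Since $|R|\cdot|Q^{\uparrow}| = n^{\gamma}\cdot O(n^{1-\gamma}) = O(n)$, there are $\tilde O(n)$ sets. A heavy bucket, one exceeding $c'' n^{1-\gamma}$, is truncated and its elements are handled separately (see below). After truncation every set has $O(n^{1-\gamma})$ elements over a universe of size $|Q| = O(n^{2-2\gamma})$, so $N = \tilde O(n^{2-\gamma})$. Preprocessing therefore costs $t_p(N)$. Each set element is computable in $\tilde O(1)$ time from the data structure's indexed bucket access, so the reduction never writes out the instance explicitly.

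\textbf{Correctness of a query.} For $b \in S$, $i \in R$ and $\ell$, query the pair $\bigl(L_{i,h_2^{(\ell)}(b)^{\uparrow},\ell},\, R_{i+h_1(b),h_2^{(\ell)}(b)^{\downarrow},\ell}\bigr)$. By the Pătrașcu split property, the two sets intersect iff there are $a\in\mathcal B_i$ and $c \in \mathcal B_{i+h_1(b)}$ with $h_2^{(\ell)}(a)+h_2^{(\ell)}(b)=h_2^{(\ell)}(c)$. A true witness $a+b=c$ therefore always produces an intersection, for every $\ell$.

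For a spurious intersection, fix $(b,i)$. Pairwise independence makes the expected number of colliding pairs at most $|\mathcal B_i||\mathcal B_j|/|Q| \le c''^2 n^{2-2\gamma}/(c' n^{2-2\gamma})$, which is a small constant for large $c'$. We declare a query ``intersecting'' only if all $k$ hashes report intersection, which drives false intersections down to $1/\mathrm{poly}(n)$.

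\textbf{Search.} Searching over the $n\cdot n^{\gamma}$ pairs $(b,i)$ with Grover/VTAA costs $\tilde O(n^{(1+\gamma)/2} t_q(N))$. On any pair marked intersecting, we Grover over $a \in \mathcal B_i$ and check $a+b \in S$ using the membership oracle. This costs $\tilde O(n^{(1-\gamma)/2})$, which VTAA absorbs, and it guarantees that no false positive is ever output.

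The $1/\mathrm{poly}(n)$ false negatives come from oracle error, amplified by $O(\log n)$ repetition of each query, and from amplitude amplification failure.

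\textbf{Heavy elements.} The $n^{\beta}$ elements in overfull buckets form a separate set $H$, found via indexed access and bucket sizes. We Grover-search over $x \in H$ paired with $y\in S$, checking membership of $x+y$ or $y-x$ via the $S$ oracle. This covers every witness touching a heavy element in $\tilde O(\sqrt{n^{\beta}\cdot n}) = \tilde O(n^{(1+\beta)/2})$ time.

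\textbf{Main obstacle.} I expect the delicate step to be the bookkeeping of almost-linearity combined with the Pătrașcu split. The shift $h_2(b)$ has $\tilde O(1)$ candidate indices, and the split is defined only on exact values, so we must ensure that the candidate family still covers the true witness while the set count stays $\tilde O(n)$. I would first try indexing sets additionally by the candidate index $i\in[\tilde O(1)]$.
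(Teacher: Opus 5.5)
Your proposal is correct and follows essentially the same route as the paper's proof. Both use the same P{\u a}tra\c{s}cu-split family of $\tilde O(n)$ sets indexed by (bucket, half-shift, hash index), one query per pair $(i,b)$ and per sampled $h_2$ (rejecting as soon as one hash reports disjointness), variable-time amplitude amplification over $R\times S$ with a Grover witness check inside $\mathcal{B}_i$ to rule out false positives, and a separate $\tilde O(n^{(1+\beta)/2})$ Grover search for witnesses touching heavy buckets. One sign needs fixing: to catch witnesses where only the sum $c$ lies in a heavy bucket, you must test $x-y\in S$ for $x\in H$, not $y-x$. The paper's own proof omits this case altogether, so handling it explicitly is a small improvement.
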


\begin{proof}
  Let $R'$ be the set of labels for the balanced buckets. 
  Let $\eta$ be a parameter to be set
  later. Sample $\eta$ hash functions $h_{2, 0}, \ldots, h_{2, \eta - 1}$ from
  the $h_2$ family. Without loss of generality let the size of the P{\u a}trașcu
  split be $| Q^{\uparrow} |$.
  
  We now describe our reduction from a size-$n$ instance $S$ of 3OA for $(G,
  +)$ to Online SetDisjointness. Remember that we have a data structure
  representing the buckets $\mathcal{B}_i \assign \{ a \in S|h_1 (a) = i \}$
  where $i \in R$ and thus the balanced buckets $R'$ with access to efficient
  oracles for $| \mathcal{B}_i |$, indexing into $\mathcal{B}_i$, membership
  in $\mathcal{B}_i$, and memebership in $S$. Our reduction must efficiently
  implement oracles for the number of sets, the size of each set, and access
  to set elements by index of our SetDisjointness instance.
  
  We produce an instance of Online SetDisjointness with $2 | R' |  |
  Q^{\uparrow} | \eta$ sets in $\mathcal{F}$. Each set corresponds to a
  quadruple $(t, i, j, l) \in \{ \uparrow, \downarrow \} \times R' \times Q^t
  \times [\eta]$. The quadruple $(t, i, j, l)$ can always be compactly
  represented by a single index by fixing some orderings on $R'$,
  $Q^{\uparrow}$, and $Q^{\downarrow}$ and retrieving $i$ and $j$ using those
  orderings. The ordering on $R'$ can be created in $O (| R |)$ time and the
  orderings on $Q^{\uparrow}$ and $Q^{\downarrow}$ can be given as a part of
  the description of the P{\u a}trașcu split of $Q$. We implement the oracles
  for Online SetDisjointness as follows:
  \begin{eqnarray*}
    | \mathcal{F} | & \assign & 2 | R' |  | Q^{\uparrow} | \eta\\
    | \mathcal{F}_{(t, i, j, l)} | & \assign & | \mathcal{B}_i |\\
    (\mathcal{F}_{(t, i, j, l)})_k & \assign & h_{2, l} ((\mathcal{B}_i)_k)
    \pm^t j.
  \end{eqnarray*}
  We use this Set Disjontness instance to decide if there is a pair $(a, c)$
  in the balanced buckets such that \ $a + b = c$ for some $b \in S$. We do
  this by looking for a marked pair $(i, b) \in R' \times S$ using
  variable-time amplitude amplification with the following two-step marking
  procedure:
  \begin{enumeratenumeric}
    \item Use linear search to look for an $l \in [\eta]$ such that
    $\mathcal{F}_{(\uparrow, i, h_{2, l}^{\uparrow} (b), l)}$ and
    $\mathcal{F}_{(\downarrow, i + h_1 (b), h_{2, l}^{\downarrow} (b), l)}$
    are disjoint. If such an $l$ is found, terminate without marking.
    
    \item Use Grover to search for $a \in \mathcal{B}_i$ such that $a + b \in
    \mathcal{B}_{i + h_1 (b)}$ using the bucket membership oracle. If found,
    mark $(i, b)$.
  \end{enumeratenumeric}
  The second step of the marking procedure, and thus the use of variable-time
  amplitude amplification, is not strictly necessary but it allows us to
  eliminate false positives for a negligible time complexity cost. If a marked
  pair is found then there exists a 3OA witness; if desired we can find the
  3OA witness by performing step 2 again.
  
  For the sake of readability we have presented this reduction assuming that
  $h_1$ and $h_2$ are perfectly linear. Since they are both $\tilde{O}
  (1)$-almost linear, we can ensure that every pair $(i, b)$ that would be
  marked by our marking procedure using perfectly linear hash functions is
  also marked when using the almost linear hash functions by increasing the
  number of sets, potential set queries in step 1, and the number of buckets
  to query in step 2 by an $\tilde{O} (1)$ factor. Doing so increases the
  number of false positives (when step 2 of the marking procedure is run but
  the pair is not marked) by at most a $\tilde{O} (1)$ factor.
  
  It remains to handle the case where one of the elements of the 3OA witness
  is in one of the heavy buckets $R - R'$ and the identification of balanced
  and heavy buckets. Since we have efficient access to the size of the
  buckets, we can identify the buckets and construct the tables needed to
  efficiently index in to balanced and heavy buckets by iterating over all the
  buckets. We can find a 3OA certificate with an element in the heavy buckets
  by Grover search over all pairs $(a, b)$ where $a$ is from a heavy bucket
  and $b \in S$ for a pair such that $a + b \in S$ using the membership oracle
  for $S$.
  
  The size of the SetDisjointness instance is
  \[ N = | \mathcal{F} |  | \mathcal{F}_{(t, i, j, l)} | = \tilde{O} \left(
     n^{\gamma}  \sqrt{Q} n^{1 - \gamma} \right) = \tilde{O} (n^{2 - \gamma})
     . \]
  This concludes the description of our reduction. Next we will analyze the
  error probabilities of our reduction.
  
  \paragraph{Error Probability}
  
  When an instance has no 3OA witness the reduction will never report one
  since a 3OA witness is required to mark a pair in step 2 of the marking
  procedure (remember that every member of a bucket is also a member of $S$).
  
  When an instance has a 3OA witness $(a, b, a + b)$ then the pair $(i, b) =
  (h_1 (a), b)$ will pass step 1 with perfect SetDisjointness queries because
  $h_{2, l} (a) \pm^{\uparrow} h_{2, l}^{\uparrow} (b) = h_{2, l} (a + b)
  \pm^{\downarrow} h_{2, l}^{\downarrow} (b)$ for all $l$ by the linearity of
  $h_2$ and the properties of the P{\u a}trașcu Split, and $h_{2, l} (a)
  \pm^{\uparrow} h_{2, l}^{\uparrow} (b) \in \mathcal{F}_{(\uparrow, i, h_{2,
  l}^{\uparrow} (b), l)}$ and $h_{2, l} (a + b) \pm^{\downarrow} h_{2,
  l}^{\downarrow} (b) \in \mathcal{F}_{(\downarrow, i + h_1 (b), h_{2,
  l}^{\downarrow} (b), l)}$ by the linearity of $h_1$. If each Set
  Disjointness query returns a true positive with a probability bounded away
  from $\frac{1}{2}$, then the probability of passing step 1 can be $1 -
  \frac{1}{\tmop{poly} (n)}$ by taking the majority of $\tilde{O} (1)$
  repetitions for each $l$. Grover search finds $a$ in step 2 with high
  probability (w.h.p.) and variable-time amplitude amplification finds a
  marked $(i, b)$ w.h.p. which can be boosted to $1 - \frac{1}{\tmop{poly}
  (n)}$ with $\tilde{O} (1)$ repetitions. The probability that the 3OA
  witness is found is the probability that $(i, b)$ passes step 1, and is
  marked in step 2, and is found by VTAA, and then Grover search over
  $\mathcal{B}_i$ finds $a$ again, which is $\left( 1 - \frac{1}{\tmop{poly}
  (n)} \right)^4 = 1 - \frac{1}{\tmop{poly} (n)}$ as required. If successive
  SetDisjointness queries do not give right or wrong answers independently
  but instead have a fixed answer for each instance created by the
  proprocessing step then we can achieve the same error bounds by $\tilde{O}
  (1)$ repetitions of the whole reduction.
  
  \paragraph{Time Complexity}Our reduction consists of five sequential steps:
  \begin{enumerateroman}
    \item Creating tables to index into balanced and heavy buckets: $\tilde{O}
    (| R |) = O (n^{\gamma})$.
    
    \item Handling heavy buckets: $\tilde{O} \left( \sqrt{n^{\beta}  | S |}
    \right)$=$\tilde{O} (n^{(1 + \beta) / 2})$.
    
    \item Creating the SetDisjointness oracle: $\tilde{O} (1)$.
    
    \item Online SetDisjointness preprocessing: $\tilde{O} (t_p (N))$.
    
    \item The variable-time amplitude amplification procedure.
  \end{enumerateroman}
  We will now work out the time complexity of the variable-time amplitude
  amplification procedure. The time complexity of variable-time amplitude
  amplification \cite{Amb12} is
  \[ O \left( T_{\max}  \sqrt{\log T_{\max}} +
     \frac{T_{\tmop{av}}}{\sqrt{p_{\tmop{mark}}}} \log^{1.5} T_{\max} \right)
  \]
  where $T_{\max}$ is the worst-case time complexity of the marking procedure,
  $T_{\tmop{av}}$ is the $\ell_2$ average of the time complexities of the
  marking procedure, and $p_{\tmop{mark}}$ is the fraction of pairs that are
  marked. Let $T_1$ and $T_2$ be the time complexities of the first and second
  step of the marking procedure respectively; then $T_1 = \tilde{O} (\eta T_q
  (N))$ and $T_2 = \tilde{O} \left( \sqrt{n / R} \right)$ since
  $\mathcal{B}_i$ is gauranteed to be balanced and thus contains $O (n / R)$
  elements. We can then write $T_{\max}$ and $T_{\tmop{av}}$ in terms of $T_1$
  and $T_2$ as $T_{\max} = T_1 + T_2$ and $T_{\tmop{av}} =
  \sqrt{p_{\tmop{int}}  (T_{\max})^2 + (1 - p_{\tmop{int}}) T_1^2}$ where
  $p_{\tmop{int}}$ is the probability that the marking procedure does not
  terminate after step 1.
  
  There are two ways for the marking procedure to proceed to step 2: first
  there could be no disjoint sets because there is a 3OA witness, second
  there could be no disjoint sets due to collisions or erroneous query results
  for all $\eta$ hash functions despite the lack of a 3OA witness. The
  probability of the first case is $p_{\tmop{mark}}$. In the worst case only
  one pair $(i, b)$ will be marked, so we will use $p_{\tmop{mark}} =
  \frac{1}{Rn} = \frac{1}{n^{1 + \gamma}}$. We now analyze the probability of
  the second case. For the $l^{\tmop{th}}$ item in step 1, the probability
  that there is an intersection between the queried sets is equivalent to the
  probability that $h_{2, l} (a) + h_{2, l} (b) = h_{2, l} (c)$ for some $a
  \in \mathcal{B}_i, c \in \mathcal{B}_{i + h_1 (b)}$ and $c \neq a + b$ which
  is equivalent to $h_{2, l} (a + b) = h_{2, l} (c)$ by linearity. $\Pr [h_{2,
  l} (x) = h_{2, l} (y)] = 1 / Q$ when $x \neq y$ since $h_2$ is pairwise
  independent and $| \mathcal{B}_i | < \frac{c'' n}{R}$ for some constant
  $c''$ since the buckets are balanced, so the probability of an unwanted
  intersection for a single $l$ is $(c'' n^{1 - \gamma})^2 {/ Q = c''}^2 / c'$
  by the union bound. If $\varepsilon$ is the bounded probability that a Set
  Disjointness query falsely reports an intersection, then by choosing $c' =
  \frac{3}{1 / 2 - \varepsilon} {c''}^2$ the overall probability that a single
  query does not report disjointness is bounded away from $1 / 2$ from below,
  so by choosing $\eta = \tilde{O} (1)$ the probability that no disjoint sets
  are reported is $\frac{1}{\tmop{poly} (n)}$. Since the two cases are
  mutually exclusive, we have $p_{\tmop{int}} = \frac{1}{n^{1 + \gamma}} +
  \frac{1}{\tmop{poly} (n)} = O \left( \frac{1}{n^{1 + \gamma}} \right)$.
  
  Now we can put together the complexity of step v, starting with
  $T_{\tmop{av}}$:
  \[ T_{\tmop{av}} = \sqrt{p_{\tmop{int}} T_2^2 + p_{\tmop{int}} \cdot 2 T_1
     T_2 + T_1^2} = \tilde{O} \left( \sqrt{\frac{n^{1 - \gamma}}{n^{1 +
     \gamma}} + \frac{n^{(1 - \gamma) / 2}}{n^{1 + \gamma}} T_1 + T_1^2}
     \right) = \tilde{O} (T_1) = \tilde{O} (T_q (N)) . \]
  Then overall time complexity of step v is
  \[ \tilde{O} \left( T_{\max} + \frac{T_{\tmop{av}}}{\sqrt{p_{\tmop{mark}}}}
     \right) = \tilde{O} \left( T_q (N) + \sqrt{n^{1 - \gamma}} + \sqrt{n^{1
     + \gamma}} T_q (N) \right) = \tilde{O} (n^{(1 + \gamma) / 2} T_q (N)) .
  \]
  Adding the complexities of all the steps together gives the reduction an
  overall a time complexity of
  \[ \tilde{O} (n^{(1 + \beta) / 2} + t_p (N) + n^{(1 + \gamma) / 2} T_q (N))
  \]
  as required.
\end{proof}

Next we give a complete reduction from 3OA to online SetDisjointness for a 
single value of $\gamma$ using the reduction above to solve the inner 3OA 
instances of the quantum walk. 

\begin{lemma}[Reduction framework for online SetDisjointness] \label{lem:red-to-online}
  ~
  
  Let $\gamma \in (0,1)$ be a parameter. If:
  
  - there is an algorithm for Online SetDisjointness with $t_p (N)$ quantum
  preprocessing time and $t_q (N)$ quantum time per query which correctly
  answers queries with bounded error $\varepsilon$ such that
  \[ t_p (N) + N^{\frac{1 + \gamma}{2 (2 - \gamma)}} T_q (N) = O \left(
     N^{\frac{1 - \delta}{2 - \gamma}} \right) \]
  for some constant $\delta > 0$, and
  
  - there is an efficient family of hash functions $h_1 : G \rightarrow R$
  with $| R | = n^{\gamma}$ that is $n^{\beta}$-almost balanced for size
  factor $c''$ and $\tilde{O} (1)$-almost linear for the abelian group $(G,
  +_G)$ in the operation $+_R$ for some $\beta \in (0, 1)$, and
  
  - there is an efficient family of hash functions $h_2 : G \rightarrow Q$
  that is pairwise independent and $\tilde{O} (1)$-almost linear for $(G,
  +_G)$ in the operation $+_Q$, where $| Q | = \frac{3}{1 / 2 - \varepsilon}
  {c''}^2 n^{2 - 2 \gamma}$ and $Q$ has an efficient P{\u a}trașcu Split of
  size $O \left( \sqrt{|Q|} \right)$,
  
  then there is an algorithm which solves the 3OA problem on $(G, +_G)$ with
  size $n$ in sub-linear quantum time with $o (1)$ one sided error.
  Specfically, there is an $\tilde{O} \left( n^{\max \left( \frac{3}{4},
  \frac{3}{3 + 2 \delta} \right)} \right)$-time algorithm.
\end{lemma}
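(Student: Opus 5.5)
The plan is to combine \Cref{lem:red-with-ds-to-online}, used as the inner algorithm, with the $L$-subset finding walk of \Cref{lem:time-of-johnson-walk} for $L=2$. A 3OA witness $(x,y,x+y)$ is located by the pair $\{x,y\}$, because the inner reduction only searches over $b\in S$ and the bucket element $a$, and checks $a+b$ through the membership oracle. Since that oracle refers to the whole input $S$, I would treat the problem as $2$-subset finding, with $x$ and $y$ in the walk subset $V$ and $x+y$ looked up in $S$. If the membership oracle must instead be restricted to $V$, the same plan works with $L=3$ and the bound changes accordingly; I expect the intended route is the one giving exponent $\max(3/4,\,3/(3+2\delta))$, which corresponds to $L=3$ in \Cref{lem:time-of-johnson-walk}. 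So I would take the walk over $3$-subsets of $S$, with $V$ of size $M$ and the witness predicate $a+b=c$ checked entirely inside $V$.

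First I build the data structure. $D[V]$ is the hash-bucket data structure for $h_1$ on $V$, realized as history-independent ordered sets from \Cref{lem:skip-data-structure}: one set for $V$ itself, ordered by index, and sets for each bucket $\mathcal{B}_i$, keyed by $(h_1(x),x)$ so that one ordered set indexes every bucket via rank queries. I also need bucket sizes, which I get as rank differences. Each update is a constant number of insert/remove operations, so $T_U(M)=\tilde{O}(1)$, and construction is $T_S(M)=\tilde{O}(M)$. I choose $c_1$ large so that, by \Cref{lem:skip-data-structure-cumulative-error}, the accumulated error over all $\tilde{O}(\mathrm{poly}(N))$ operations of the walk stays $o(1)$.

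Second I bound $T_D(M)$. I apply \Cref{lem:red-with-ds-to-online} to the instance $V$ of size $M$, with the hash families for parameter $\gamma$ and $|Q|$ as specified. This gives time $\tilde{O}(M^{(1+\beta)/2}+t_p(N')+M^{(1+\gamma)/2}t_q(N'))$ with $N'=\tilde{O}(M^{2-\gamma})$. Substituting $M=N'^{1/(2-\gamma)}$, the hypothesis gives $t_p(N')+M^{(1+\gamma)/2}t_q(N')=O(N'^{(1-\delta)/(2-\gamma)})=\tilde{O}(M^{1-\delta})$. The heavy-bucket term $M^{(1+\beta)/2}$ is sublinear since $\beta<1$. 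Replacing $\delta$ by $\min(\delta,(1-\beta)/2)$ if necessary, we get $T_D(M)=\tilde{O}(M^{1-\delta})$. The error is one-sided with failure probability $1/\mathrm{poly}(M)$, which is $1/\mathrm{poly}(N)$ since $M=N^{q}$.

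Third I invoke \Cref{lem:time-of-johnson-walk}. This yields total time $\tilde{O}(n^{3/(3+2\delta)})$ when $\delta<1/2$ and $\tilde{O}(n^{3/4})$ otherwise, i.e.\ $\tilde{O}(n^{\max(3/4,3/(3+2\delta))})$, with $o(1)$ error. One-sidedness holds because a YES is only reported after the inner step~2 has verified a genuine witness inside $V\subseteq S$.

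The main obstacle is making the inner reduction a valid checking unitary $P$ inside the walk. It must be run coherently, with its $1/\mathrm{poly}$ error not destroying the walk. The fixed-per-instance SetDisjointness answers need the $\tilde{O}(1)$ whole-reduction repetitions, and the hash functions must be sampled once, classically, outside the walk. I would handle this by standard error reduction to $1/\mathrm{poly}(N)$ per call and the union bound over $t_2$ calls, as in the success-probability computation of \Cref{lem:time-of-johnson-walk}.
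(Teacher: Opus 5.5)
Your proposal follows the paper's proof. You realize the hash-bucket structure with the ordered sets of \Cref{lem:skip-data-structure}, use \Cref{lem:red-with-ds-to-online} on the walk subset $V$ as the checking procedure, feed $T_D(M)=\tilde{O}(M^{1-\delta})$, $T_S(M)=\tilde{O}(M)$ and $T_U(M)=\tilde{O}(1)$ into \Cref{lem:time-of-johnson-walk} with $L=3$, and suppress data-structure failures by taking $c_1$ large in \Cref{lem:skip-data-structure-cumulative-error}. $L=3$ is in fact forced: the inner reduction detects $a+b$ only through the buckets of $V$, so your $L=2$ variant would not fit. Keeping all buckets in one ordered set keyed by $(h_1(x),x)$, instead of the paper's $|R|$ separate sets, is harmless, since bucket boundaries can be found by binary search with the $i$-th-smallest operation.

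Two small points. First, you are right not to drop the heavy-bucket term $M^{(1+\beta)/2}$; the paper discards it with only the remark $\beta<1$. But after replacing $\delta$ by $\min(\delta,(1-\beta)/2)$, state what you actually get: $\tilde{O}\bigl(n^{3/(3+2\min(\delta,(1-\beta)/2))}\bigr)$. This matches the stated $\tilde{O}(n^{\max(3/4,3/(3+2\delta))})$ only when $\beta\le 1-2\delta$. Also, the $3/4$ branch is never reached through this inner reduction, since $(1-\beta)/2<1/2$. Sub-linearity, which is all \Cref{thm:clb-framework-online} needs, is unaffected. Second, for one-sidedness do not rely on step~2 of the inner reduction alone, because it queries the fallible data structure. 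As the paper does, finish by verifying the output triple directly through the index oracle for $S$, so that data-structure errors can only cause false negatives.
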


\begin{proof}
  We are given an ordered set $S$ as our 3OA instance. 
  \Cref{lem:skip-data-structure} gives an
  ordered-set data structure that supports adding an element, removing an
  element, accessing an element by index, and checking if an element is in the
  list all in $\op{\tilde{O} (1)}$ time and is history-independent. We can use
  $R$ ordered-set data structures to represent each bucket and one to store
  the current subset of $S$ to implement the hash-bucket data structure for
  $h_1$. We store each element as an index-value pair $(i, S_i)$ and use the
  index as the unique number and to implement the comparison function. We will
  call any individual instance of the ordered-set data structure in our
  hash-bucket data structure a set.
  
  We will temporarily assume that operations on our data structure always
  succeed and apply it to \Cref{lem:time-of-johnson-walk} to get an algorithm
  for 3OA using the reduction of \Cref{lem:red-with-ds-to-online} to solve 3OA
  with our data structure using an algorithm for online SetDisjointness. 
  
  Since $\beta < 1$, \Cref{lem:red-with-ds-to-online} gives an algorithm for 3OA with
  our data structure in $\tilde{O} (t_p (N) + n^{(1 + \gamma) / 2} T_q (N))$
  time; this is equal to $O (n^{1 - \delta})$ time by applying the time
  complexity of the online SetDisjointness algorithm and $N = \tilde{O} (n^{2
  - \gamma})$ from the lemma. Applying this complexity for 3OA with our data
  structure to \Cref{lem:time-of-johnson-walk} then gives the stated time complexity.
  
  The final task is to verify that the probability that the reduction fails
  due to data structure errors is sufficiently low. Let $c_1$ be a constant to
  be set later. By \Cref{lem:skip-data-structure} the error probability for an 
  operation on one of the
  sets is $O \left( \frac{1}{n^{c_1 - 1}} \right)$; since we have $O (R)$ sets
  by a union bound the error probability for one operation on an arbitrary set
  is $O \left( \frac{n^{1 + \gamma}}{n^{c_1}} \right)$. Since the algorithm
  uses sub-linear time it uses fewer than $O (n)$ data-structure operations
  total, so by \Cref{lem:skip-data-structure-cumulative-error} the distance between the final states of the quantum walk
  algorithm with and without data-structure errors is $O (n^{1 + (1 + \gamma -
  c_1) / 2}) = O (n^{3 / 2 + \gamma / 2 - c_1 / 2})$. So by choosing any
  sufficiently large $c_1 \gg 4$ this distance becomes $O \left(
  \frac{1}{\tmop{poly} (n)} \right)$. Thus the probability of the measured
  state after the quantum walk algorithm having an error due to data structure
  errors is also $O \left( \frac{1}{\tmop{poly} (n)} \right)$. Combining this
  with the $o (1)$ error from the quantum walk algorithm, the $O \left(
  \frac{1}{\tmop{poly} (n)} \right)$ error from the algorithm for 3OA with
  structure, and a check of the resulting 3OA witness (which can be done
  using the indices) gives an overall $o (1)$ one sided error as required.
\end{proof}

Now that we have a framework for generating a reduction from 3OA to online 
SetDisjointness for any fixed parameter $\gamma$, we can combine it with 
the assumption that there are hash families for any $0 < \gamma < 1$ to 
obtain a framework for lower bounds on a range of query-preprocessing 
tradeoffs for online SetDisjointness conditioned on the complexity of 3OA.

\restateonlineframeworktheorem*

\begin{proof}
  We aim to show that if $p + 2 q < 1$ then there is an $O (n^{1 -
  \delta})$-time algorithm for the 3OA problem on the abelian group $(G,
  +_G)$. Assume that there is an algorithm for Online SetDisjointness with $p
  + 2 q < 1$; then by applying \Cref{lem:red-to-online} using the given 
  family of hash functions
  it is sufficient to show that for any valid values for $p$ and $q$ there is
  a value of $0 < \gamma < 1$ such that $N^p + N^{\frac{1 + \gamma}{2 (2 -
  \gamma)}} N^q = O \left( N^{\frac{1 - \delta}{2 - \gamma}} \right)$ for some
  constant $\delta>0$. Let $\delta' = \min \left( \frac{1}{2 - \gamma} - p,
  \frac{1}{2 - \gamma} - \frac{1 + \gamma}{2 (2 - \gamma)} - q \right)$; then
  $\delta' = \delta / (2-\gamma)$ so 
  it is sufficient to show that $\delta' > 0$.
  
  We choose $\gamma = \frac{4 (p - q) - 1}{2 (p - q) + 1}$. Then $\frac{1 +
  \gamma}{2 (2 - \gamma)} = p - q$, so both terms are equal and $\delta' =
  \frac{1}{2 - \gamma} - p$. Substituting for the value of $\gamma$ gives
   \[ \delta' = \frac{1 - p - 2 q}{3} > 0 \]
  by the assumed inequality $p + 2 q < 1$.
\end{proof}

\section{Conditional lower bounds on 3SUM and 3XOR}

In this section we prove our main result, which we have reproduced below for the reader's convenience. 

\restateonlinetheorem*
\begin{proof}
    There is an efficient, $n^{1-\gamma}$-almost balanced, almost linear, and pairwise-independent hash family for $(\mathbb{Z}_{2 n^3}, +)$ for any $0 < \gamma < 1$ \cite{KPP16}. Applying \Cref{thm:clb-framework-online} gives the stated conditional lower from 3SUM. 

    There is an efficient,  $n^{1-\gamma}$-almost balanced, perfectly linear, and pairwise-independent hash family for $(\mathbb{Z}_{2}^{3\log n}, \oplus)$ for any $0 < \gamma < 1$ \cite{JV16}. Applying \Cref{thm:clb-framework-online} gives the stated conditional lower from 3XOR. 
\end{proof}

\bibliographystyle{alpha}
\bibliography{references}

\end{document}